\renewcommand*{\p@section}{\S\,}
\renewcommand*{\p@subsection}{\S\,}
\renewcommand*{\p@subsubsection}{\S\,}
\newtheorem{theorem}{Theorem}[section]
\newtheorem{lemma}[theorem]{Lemma}
\newtheorem{prop}[theorem]{Proposition}
\newtheorem{cor}[theorem]{Corollary}
\theoremstyle{definition}
\theoremstyle{remark}
\newtheorem{remark}[theorem]{Remark}
\numberwithin{equation}{section}
\newcommand{\la}{\label}
\newcommand{\Res}{\mathtt{Res}}
\newcommand{\End}{\mathtt{End}}
\newcommand{\id}{\mathrm{id}}
\newcommand{\into}{\hookrightarrow}
\newcommand{\ds}{\displaystyle}
\def\c{\mathbb{C}}
\def\R{\mathbb{R}}
\def\eh{\widehat{e}}
\def\Z{\mathbb{Z}}
\def\A{\mathcal{A}}
\def\H{\mathcal{H}}
\def\He{{H}}
\def\Hsf{\mathsf{H}}
\def\Asf{\mathsf A}
\def\Hsfh{\widehat{\mathsf{H}}}
\def\Asfh{\widehat{\mathsf A}}
\def\Hr{{H}^{(r)}}
\def\Lr{{L}^{(r)}}
\def\L{\mathsf L}
\def\Lh{\widehat{\mathsf L}}
\def\yr{{y}^{(r)}}
\def\pp{{\hat p}}
\def\Vc{V^{\vee}}
\def\Sc{S_n^\vee}
\def\D{\mathcal{D}}
\def\U{\mathcal U}
\def\Ah{A_\hbar}
\def\HH{\mathtt H}
\def\BB{\mathtt B}
\def\Irr{\mathtt{Irr}}
\begin{document}
%
%
%
\title{Integrability of the Inozemtsev spin chain}

\author{Oleg Chalykh}
\address{School of Mathematics, University of Leeds, Leeds LS2 9JT, UK}
\email{o.chalykh@leeds.ac.uk}
%

%
\begin{abstract}
We show that the Inozemtsev spin chain is integrable. The conserved quantities (commuting Hamiltonians) are constructed using elliptic Dunkl operators. We also suggest a generalisation. 
\end{abstract}

\maketitle

\section{Introduction} The Inozemtsev quantum spin chain \cite{I1} is described by the Hamiltonian 
\begin{equation}\label{ham}
\H=\sum_{i<j}^n\wp\left(\frac{i-j}{n}\right)P_{ij}\,.
\end{equation}
Here $\wp(z)=\wp(z | 1, \tau)$ is the Weierstrass $\wp$-function with periods $1, \tau$ 
and $P_{ij}$ acts on the tensor power $U^{\otimes n}$ of  $U=\c^m$ by permuting the factors. This is viewed as a spin chain on $n$ sites, with each copy of $U$ representing local spin states, and with $\ds\wp\left(\frac{i-j}{n}\right)$ expressing the strength of interaction between the $i$th and $j$th sites. To show that it is integrable, one would like to find elements of $\End_\c(U^{\otimes n})$ commuting with $\H$. This is an old problem, see \cite{I2, DI08, DI, SS, FGL} for some of its history and related context. The trigonometric case -- with $\sin^{-2}(\pi z)$ in place of $\wp(z)$  
-- describes the celebrated Haldane--Shastry chain \cite{H, S} which is integrable and has the Yangian symmetry \cite{BGHP}. Nothing of that sort in known in the elliptic case so, as it stands, \eqref{ham}
falls outside the class of problems solved by Quantum Inverse Scattering 
or similar methods. 

Nevertheless, as will be demonstrated below, one can construct 
commuting Hamiltonians 
from 
those of the quantum elliptic spin Calogero--Moser system,  by ``freezing''. This builds on (and elucidates) the original idea by Polychronakos \cite{P}, see also related works \cite{FM, MP, TH}.  
Our main ingredients are the elliptic Dunkl operators and Cherednik algebra.
Along the way, we find a generalisation of the Inozemtsev spin chain.   

One would also like to calculate the eigenvalues and eigenvectors of $\H$, see \cite{KL} for the state of the art and references. Hopefully, the link to the elliptic Dunkl operators can shed some light onto this. Our methods apply, almost verbatim, to elliptic Calogero--Moser systems (and spin chains) for arbitrary root systems. They can also be extended to Calogero--Moser systems of $R$-matrix type and the corresponding spin chains from \cite{LOZ, SZ}, which will be done separately \cite{CL}. 

\medskip

\noindent\textbf{Acknowledgement.} I would like to thank Andrei Zotov for useful conversations. I am also grateful to Jules Lamers and Rob Klabbers for stimulating discussions and for waiting patiently for this paper to be written up.

\section{Rational Calogero--Moser system and Dunkl operators}
Here we recall the well-known relationship between the rational Calogero--Moser system, Dunkl operators and Cherednik algebra, see \cite{E} and references therein.

\subsection{} 
Let $V=\c^n$ be the $n$-dimensional complex Euclidean vector space with the standard basis $(e_i)$ and coordinates $(x_i)$.  The symmetric group $S_n$ acts on $V$ by permuting the basis vectors. This induces an $S_n$-action on the algebra $\c(V)$ of meromorphic functions in $n$ variables by $(w.f)(x)=f(w^{-1}.\,x)$ for $x\in V$, $w\in S_n$. The crossed product $\c(V)*S_n$ is formed by taking the vector space $\c(V)\otimes\c S_n$
with the multiplication $(f\otimes w)(f'\otimes w')=f(w.f')\otimes ww'$. 
Let $\D(V)$ denote the ring of differential operators on $V$ with meromorphic coefficients. It is generated by the partial derivatives $\partial_i=\frac{\partial}{\partial x_i}$ and operators of multiplication by $g\in \c(V)$. We have a natural action of $S_n$ on $\D(V)$, hence the crossed product $\D(V)*S_n$. As an algebra, $\D(V)*S_n$ is generated by its two subalgebras, $1\otimes\c S_n$ and $\D(V)\otimes 1$ which can be identified with $\c W$ and $\D(V)$, respectively. Using these identifications, we replace $a\otimes w$ by $aw$, so each element of $\D(V)*S_n$ is written uniquely as $a=\sum_{w\in S_n}a_ww$ with $a_w\in\D(V)$.

\subsection{} 
The \emph{quantum rational} Calogero--Moser (CM) system is described by the Hamiltonian
\begin{equation}\label{cm}
\Hr=\sum_{i=1}^n \hbar^2\partial_i^2-\sum_{i<j}^n\frac{2\kappa(\kappa-\hbar)}{(x_i-x_j)^2}\,.
\end{equation} 
Here $\kappa\in\c$ is the coupling constant and $\hbar\ne 0$ is a quantum parameter \footnote{To follow the standard conventions of quantum mechanics, $\hbar$ should be replaced $-\mathrm{i}\hbar$ and $\kappa$ should be purely imaginary.}. This is a completely integrable system, in the sense that it admits $n$ commuting Hamiltonians. 
These are constructed using the \emph{Dunkl operators}:    
\begin{equation}\label{yir}
\yr_i=\hbar \partial_i-\kappa\sum_{j\ne i}^n\frac{1}{x_i-x_j}s_{ij}\,,\qquad i=1,\dots, n\,.
\end{equation}
The key property of the these operators is their commutativity, $[\yr_i, \yr_j]=0$, and equivariance,
\begin{equation*}
s_{ij}\yr_i=\yr_js_{ij}\,,\qquad s_{ij}\yr_k=\yr_ks_{ij}\quad (k\ne i, j)\,.
\end{equation*} 
As a result, the assignment $\,e_i \mapsto \yr_i\,$
extends to a $S_n$-\,equivariant algebra map
\begin{equation}
\la{hom}
\c[V^*]\to \D(V)*S_n \,,\quad q \mapsto q(\yr) \ .
\end{equation}
Introduce the linear map 
\begin{equation}\label{res}
\Res:\ \D(V)*S_n\,\to\, \D(V)\,,\qquad \sum_{w\in S_n}a_ww\mapsto \sum_{w\in S_n}a_w\,.  
\end{equation}
Combining this map with \eqref{hom}, define
\begin{equation}\la{lp}
\Lr_q=\Res\,q(\yr)
\,,\qquad 
q\in \c[V^*]^{S_n}\,.
\end{equation}
The $S_n$-invariance of $q$ and the commutativity of the Dunkl operators imply that (1) each $\Lr_q$ is $S_n$-invariant, (2) $\Lr_q\Lr_{q'}=\Lr_{qq'}$ for any $q,q'\in \c[V^*]^{S_n}$, (3) the family $\{\Lr_q\,,\ q\in \c[V^*]^{S_n}\}$ is commutative.  

Taking $q=e_1^2+\dots +e_n^2\in S^2V\subset \c[V^*]$, a direct calculation shows that 
\begin{equation}\la{dcm}
q(\yr)=(\yr_1)^2+\dots +(\yr_n)^2=\sum_{i=1}^n \hbar^2\partial_i^2-\sum_{i<j}^n \frac{2}{(x_i-x_j)^2}\kappa(\kappa-\hbar s_\alpha)\,.
\end{equation} 
Hence, $\Lr_q=\Res\, q(\yr)$ in this case is precisely the CM Hamiltonian \eqref{cm}. Other symmetric combinations of $\yr_1, \dots, \yr_n$ produce higher commuting Hamiltonians.

\subsection{}\label{clcase} 
The classical limit corresponds to taking $\hbar\to 0$. More precisely, we 
view the Dunkl operators as elements of the algebra
\begin{equation*}
\Ah=\c[[\hbar]]\otimes \c(V)[\hat p_1, \dots, \hat p_n]\,,\qquad \hat p_i=\hbar\partial_{i}\,.
\end{equation*}
The quantum momenta $\hat p_i$ satisfy the relations $[\hat p_i, f]=\hbar\,\partial_{i}f$ for $f\in\c(V)$.
We have an algebra isomorphism
\begin{equation}\label{eta}
\eta_0:\ (\Ah*S_n)/\hbar(\Ah*S_n)\to A_0*S_n\,,\qquad f\mapsto f\,,\ \ \hat p_i\mapsto p_i\,,\ \ w\mapsto w\,,
\end{equation}
where $$A_0=\c(V)\otimes\c[V^*]=\c(V)[p_1, \dots, p_n]$$ is the classical version of $\Ah$.
Therefore, $\Ah$ and $\Ah*S_n$ are formal deformations of $A_0$ and $A_0*S_n$, respectively. 
Note that $A_0$ is commutative, with Poisson bracket defined by $\{\eta_0(a), \eta_0(b)\}=\eta_0(\hbar^{-1}[a,b])$ for $a,b\in \Ah$.
For any $a\in\Ah*S_n$, we call $\eta_0(a)$ the \emph{classical limit} of $a$. For instance, the classical limit of \eqref{yir} are the {\it classical Dunkl operators}
\begin{equation}\label{yirc}
\yr_{i,c}=p_i-\kappa\sum_{j\ne i}^n\frac{1}{x_i-x_j}s_{ij}\,,\qquad i=1,\dots, n\,.
\end{equation}
These are commuting, equivaraint elements of $A_0*S_n$, and we have a classical variant of \eqref{hom}:
\begin{equation}\label{homc}
\c[V^*] \to A_0*{S_n} \ ,\quad q \mapsto q(\yr_{c})\,.
\end{equation}  

\subsection{}
The classical limit of \eqref{cm} gives the Hamiltonian  
\begin{equation}\label{cmc}
\Hr_c=\sum_{i=1}^n p_i^2-\sum_{i<j}^n\frac{2\kappa^2}{(x_i-x_j)^2}\,.
\end{equation} 
Its complete integrability can be established similarly to the quantum case.  
Namely, using the classical variant of the map \eqref{res}, 
we set
\begin{equation}\la{lpc}
\Lr_{q,c}=\Res\,q(\yr_c)
\,,\qquad 
q\in \c[V^*]^{S_n}\,.
\end{equation}
By construction, $\Lr_{q,c}$ is the classical limit of $\Lr_q$. Hence, the family $\{\Lr_{q,c}\,,\ q\in \c[V^*]^{S_n}\}$ is Poisson commutative, i.e. it defines a classical integrable system. Passing to the classical limit in \eqref{dcm}, we get
\begin{equation}\la{dcmc}
(\yr_{1,c})^2+\dots +(\yr_{n,c})^2=\sum_{i=1}^n p_i^2-\sum_{i<j}^n \frac{2\kappa^2}{(x_i-x_j)^2}\,,
\end{equation}  
which is the Hamiltonian \eqref{cmc}. Note that in this case there is no need to apply $\Res$. This is true for the higher Hamiltonians as well:
\begin{lemma}[Lemma 2.2 \cite{EFMV}] \label{nores}
For any $q\in\c[V^*]^{S_n}$, we have $q(\yr_c)\in A_0$. Hence, $\Lr_{q,c}=q(\yr_c)$, i.e. the application of $\Res$ is not necessary. 
\end{lemma}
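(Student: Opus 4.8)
The plan is to characterise $A_0$ inside $A_0*S_n$ by a commutation property and thereby reduce the statement to a short computation. Writing an element of $A_0*S_n$ uniquely as $a=\sum_{w\in S_n}a_w w$ with $a_w\in A_0$, and using $wx_l=x_{w(l)}w$ together with the commutativity of $A_0$ (in particular $[p_i,x_l]=0$), one gets $[a,x_l]=\sum_{w}a_w\,(x_{w(l)}-x_l)\,w$ for each coordinate $x_l$. Since $A_0*S_n$ is free as a left $A_0$-module on $\{w\}_{w\in S_n}$, since $A_0=\c(V)[p_1,\dots,p_n]$ is a domain, and since for every $w\neq e$ there is an $l$ with $x_{w(l)}-x_l\neq0$, it follows that $a$ commutes with $x_1,\dots,x_n$ if and only if $a\in A_0$. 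Hence the lemma is equivalent to the assertion that $q(\yr_c)$ commutes with every $x_l$. As the power sums $r_k=e_1^k+\dots+e_n^k$ generate $\c[V^*]^{S_n}$, as $q\mapsto q(\yr_c)$ is an algebra homomorphism by \eqref{homc}, and as the centraliser of a fixed $x_l$ is closed under sums and products, this reduces further to proving
\begin{equation*}
\Big[\,\sum_{i=1}^n(\yr_{i,c})^k,\ x_l\,\Big]=0\qquad\text{for all }k\geq1\text{ and all }l.
\end{equation*}

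The next step is the elementary commutator. Because $A_0$ is commutative the momentum part of $\yr_{i,c}$ drops out, and from $[s_{ij},x_l]=(x_{s_{ij}(l)}-x_l)s_{ij}$ one finds
\begin{equation*}
[\yr_{i,c},x_l]=\begin{cases}\ \kappa\sum_{j\neq i}s_{ij}\,,& l=i\,,\\[4pt]\ -\kappa\,s_{il}\,,& l\neq i\,.\end{cases}
\end{equation*}
Expanding by the Leibniz rule,
\begin{equation*}
\Big[\,\sum_i(\yr_{i,c})^k,\ x_l\,\Big]=\sum_i\sum_{r=0}^{k-1}(\yr_{i,c})^r\,[\yr_{i,c},x_l]\,(\yr_{i,c})^{k-1-r}\,,
\end{equation*}
one separates the $i$-sum into the term $i=l$ and the terms $i\neq l$, substitutes the two cases above, and moves each transposition to the left past the powers of the classical Dunkl operators using the equivariance $s_{ij}\yr_{i,c}=\yr_{j,c}s_{ij}$, so that for instance $(\yr_{i,c})^r s_{il}=s_{il}(\yr_{l,c})^r$. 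After this, the $i=l$ part equals $\kappa\sum_{m\neq l}s_{lm}\sum_{r=0}^{k-1}(\yr_{m,c})^r(\yr_{l,c})^{k-1-r}$, while the $i\neq l$ part equals $-\kappa\sum_{m\neq l}s_{lm}\sum_{r=0}^{k-1}(\yr_{l,c})^r(\yr_{m,c})^{k-1-r}$; since the $\yr_{i,c}$ pairwise commute these two sums over $r$ are equal (reindex $r\mapsto k-1-r$), and the two parts cancel. This establishes the displayed identity, hence $q(\yr_c)\in A_0$, and then $\Lr_{q,c}=\Res\,q(\yr_c)=q(\yr_c)$ since $\Res$ restricts to the identity on $A_0$.

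The only real work is the index bookkeeping in the last cancellation; everything preceding it is formal. It is worth noting where classicality is used: in the quantum setting $[\hbar\partial_i,x_l]=\hbar\,\delta_{il}$ adds an extra scalar to $[\yr_i,x_l]$, the cancellation breaks, and $q(\yr)$ genuinely fails to lie in $\D(V)$ — which is exactly why $\Res$ is indispensable in \eqref{lp} yet superfluous in \eqref{lpc}.
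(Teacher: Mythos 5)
Your argument is correct. Note that the paper offers no proof of this lemma at all -- it is imported verbatim from [EFMV, Lemma 2.2] -- so what you give is an independent, self-contained verification rather than a variant of an in-paper argument, and it holds up at every step. The identification of $A_0$ with the centraliser of $x_1,\dots,x_n$ inside $A_0*S_n$ is justified exactly as you say (freeness of the crossed product as a left $A_0$-module on the group elements, plus $A_0=\c(V)[p_1,\dots,p_n]$ being a domain and $x_{w(l)}-x_l\neq 0$ for some $l$ whenever $w\neq e$); the reduction to power sums uses only that \eqref{homc} is an algebra map, which rests on the commutativity of the $\yr_{i,c}$, and that we are in characteristic zero; the commutators $[\yr_{i,c},x_l]$ are computed correctly, the momentum contribution dropping out precisely because $A_0$ is commutative; and the final cancellation is sound, with the equivariance $s_{ij}\yr_{i,c}=\yr_{j,c}s_{ij}$ used to move the transpositions left and the commutativity of the classical Dunkl operators entering exactly at the reindexing $r\mapsto k-1-r$. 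Your closing observation is also the right diagnosis of the quantum case: the extra term $\hbar\delta_{il}$ in $[\yr_i,x_l]$ breaks the cancellation, which is why $\Res$ is genuinely needed in \eqref{lp} (cf.\ the reflection term in \eqref{dcm}, which carries an explicit factor of $\hbar$ and hence disappears in the classical limit, consistently with the lemma) but is superfluous in \eqref{lpc}.
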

An analogous statement in the elliptic case will be important below. 

\subsection{} \label{cher}
By definition, the {\it rational Cherednik algebra}
$\HH_{\hbar, \kappa}$ of type $S_n$ is the subalgebra of $\D(V) *S_n$ generated by  $w\in S_n$, $x_i$ ($i=1, \dots, n$),
and the Dunkl operators \eqref{yir}. 
The {\it spherical subalgebra} of $ \HH_{\hbar, \kappa}
$ is defined as $e\, \HH_{\hbar, \kappa} \,e \,$, where 
\begin{equation}
e =\frac{1}{n!} \sum_{w \in S_n } w \,.
\end{equation}
Restricting \eqref{res} onto the spherical subalgebra, one obtains an algebra map
 \begin{equation}
\la{HC}
\Res:\,e \HH_{\hbar, \kappa} e \into \D(V)^{S_n}\,.
\end{equation}
whose image is denoted as $\BB_{\hbar ,\kappa}$. Elements of $B_{\hbar, \kappa}$ can be obtained by applying $\Res$ to symmetric combinations of $x_i$ and $\yr_i$. Inside $\BB_{0,\kappa}$ we have the commutative algebra of quantum rational CM Hamiltonians $\{\Lr_q\,,\, q\in\c[V^*]^{S_n}\}$. 

The classical limits are defined in a similar way. That is, $\HH_{0, \kappa}$ is the subalgebra of $A_0 *S_n$ generated by  $w\in S_n$, $x_i$ ($i=1, \dots, n$),
and the classical Dunkl operators \eqref{yirc}. The {spherical subalgebra} is defined as $e\, \HH_{0, \kappa} \,e \,$, and we have an algebra map 
 \begin{equation}
\la{HCc}
\Res:\,e \HH_{0, \kappa} e \into A_0^{S_n}\,.
\end{equation}
Its image, $\BB_{0, \kappa}$, is a Poisson subalgebra of $A_0^{S_n}$, containing a Poisson commutative algebra, $\{\Lr_{q,c}\,,\, q\in\c[V^*]^{S_n}\}$, of classical rational CM Hamiltonians. 

\section{Elliptic Dunkl operators and Calogero--Moser system}

\subsection{} 
The quantum elliptic CM system is described by the Hamiltonian
\begin{equation}\label{ecm}
\He=\hbar^2\sum_{i=1}^n\partial_i^2-2\kappa(\kappa-\hbar)\sum_{i<j}^n\wp(x_i-x_j)\,.
\end{equation} 
Here $\wp(z)=\wp(z | 1, \tau)$ is the Weierstrass $\wp$-function. The classical Hamiltonian is 
\begin{equation}\label{ecmc}
H_c=\sum_{i=1}^n p_i^2-2\kappa^2\sum_{i<j}^n\wp(x_i-x_j)\,. 
\end{equation} 
\subsection{} 
As in the rational case, both quantum and classical systems are completely integrable \cite{Ca, Pe, OP1, OP2, OS}.
Following \cite{EFMV}, this can be shown using \emph{elliptic Dunkl operators} \cite{BFV} which, in the quantum case, are    
\begin{equation}\label{yie}
y_i=\hbar\partial_i-\kappa\sum_{j\ne i}^n \phi(\lambda_i-\lambda_j, x_i-x_j)s_{ij}\,,\qquad \phi(\mu, z) =\frac{\sigma(z-\mu)}{\sigma(z)\sigma(-\mu)}\,.
\end{equation}
Here $\sigma(z)=\sigma(z |1, \tau)$ is the Weierstrass $\sigma$-function, and $\lambda_1, \dots, \lambda_n$ are auxiliary variables referred to as the \emph{spectral variables}. 
Hence, $y_i=y_i(\lambda)$ are $\lambda$-dependent elements of $\D(V)*S_n$.

For $\xi\in V$, we write $y_\xi=\sum_{i=1}^n \xi_iy_i$ .
Two main properties of the operators $y_\xi$ are their commutativity and equivariance: for all $\,\xi, \eta \in V\,$ and $ w \in S_n $,
\begin{equation}\la{eduprop}
 y_{\xi}\,y_{\eta} = y_{\eta}\,y_{\xi}\,,\qquad \,w\,y_\xi(\lambda) =
y_{w\xi}(w\lambda)\,w\,.
\end{equation}
Note that in the second relation the group action affects both $\xi$ and $\lambda$. As before, the assignment $\,\xi \mapsto y_\xi\,$
extends to an algebra map
\begin{equation}\label{ehom}
\c[V^*] \to \D(V)*S_n \ ,\quad q\mapsto q(y)\,.
\end{equation}
However, unlike in the rational case, this map is not $S_n$-equivariant and constructing commuting quantum Hamiltonians requires certain regularization. 

\subsection{}\label{lq} 
To explain the regularization procedure \cite{EFMV}, we extend the algebra map \eqref{ehom} by allowing polynomials with $\lambda$-dependent coefficients:
\begin{equation}\label{ehomex}
\c(V)\otimes \c[V^*] \to \D(V)*S_n \ ,\quad f\mapsto f(\lambda, y)\,.
\end{equation}
Next, recall the rational classical Hamiltonians $\Lr_{q,c}=q(\yr_c)$, $q\in\c[V^*]^{S_n}$. These are $S_n$-invariant elements of $A_0=\c(V)\otimes \c[V^*]$. 
\begin{theorem}[\cite{EFMV}]\label{efmv}
For $q\in\c[V^*]^{S_n}$, consider $\Lr_{q,c}\in A_0^{S_n}$. Identify $A_0$ with $\c(V)\otimes 
\c[V^*]$ in\eqref{ehomex} and obtain $\Lr_{q,c}(\lambda, y)\in \D(V)*W$ by applying \eqref{ehomex}, i.e. by replacing classical momenta in $\Lr_{q,c}$ with elliptic Dunkl operators. 

$(1)$ The elements $\Lr_{q,c}(\lambda, y)$ are regular near $\lambda=0$ and so have a well-defined limit at $\lambda=0$. 

$(2)$ Setting $L_q:=\Res \lim_{\lambda\to 0}\,\Lr_{q,c}(\lambda, y)$ defines an algebra embedding $\c[V^*]^{S_n}\to \D(V)^{S_n}$, $q\mapsto L_q$.
\end{theorem}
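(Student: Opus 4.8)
The plan is to prove the regularity assertion (1) — the substantive part — and then to obtain (2) from it by largely formal arguments.

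\emph{Reduction of (1) to a local statement.} Regard $\Lr_{q,c}(\lambda,y)$ as a function of $\lambda\in V$ with values in $\D(V)*S_n$. By \eqref{yie} the only source of $\lambda$-poles in the operators $y_i$ is the factor $\sigma(-(\lambda_i-\lambda_j))^{-1}$ inside $\phi$, so $\Lr_{q,c}(\lambda,y)$ is holomorphic away from the union of the diagonals $\Delta_{ij}=\{\lambda_i=\lambda_j\}$. Since $\lambda=0$ lies on every $\Delta_{ij}$ at once, regularity there cannot be read off from the behaviour along a single diagonal; instead I will use a removable-singularity (Hartogs) argument. Suppose $\Lr_{q,c}(\lambda,y)$ has no pole along a generic smooth point of each $\Delta_{ij}$. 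Then $\Lr_{q,c}(\lambda,y)$ is holomorphic on the complement of a set of codimension $\ge 2$ — the locus where two of the equalities $\lambda_i=\lambda_j$ hold simultaneously, or where three indices coincide — and Hartogs' extension theorem makes it holomorphic on all of $V$; in particular it is regular at $\lambda=0$, with a well-defined value there. So it suffices to establish the local statement: $\Lr_{q,c}(\lambda,y)$ has no pole along a generic point of $\Delta_{12}$, the other diagonals following by relabelling the indices.

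\emph{The local statement.} Localise near a generic point of $\Delta_{12}$ with transverse coordinate $\mu=\lambda_1-\lambda_2$. There $y_3,\dots,y_n$ stay holomorphic while $y_1,y_2$ develop simple poles: from $\phi(\mu,z)=-\mu^{-1}+\zeta(z)+O(\mu)$ one gets $y_1=\kappa\mu^{-1}s_{12}+b_1$ and $y_2=-\kappa\mu^{-1}s_{12}+b_2$ with $b_1,b_2$ holomorphic at $\mu=0$. Three facts govern the cancellation of the poles of $\Lr_{q,c}(\lambda,y)$: first, $s_{12}$ reverses the transverse coordinate, $s_{12}\,\mu=-\mu\,s_{12}$, since it swaps $\lambda_1$ and $\lambda_2$; second, the equivariance relation of \eqref{eduprop} applied to $s_{12}$ relates the regular parts by $b_1|_{\mu=0}=s_{12}\,b_2|_{\mu=0}\,s_{12}$; third, the object being substituted into, $\Lr_{q,c}=q(\yr_c)$, lies in $A_0^{S_n}$ by Lemma~\ref{nores} and is symmetric in the momenta, so near $\Delta_{12}$ the only combinations of $y_1,y_2$ that appear (with $y_3,\dots,y_n$, the $x_i$, and the $\partial_i$ as holomorphic coefficients) are those forced by the $S_n$-symmetry. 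One then checks, by a finite computation using these three facts together with the commutativity $y_1y_2=y_2y_1$, that in every such combination the negative powers of $\mu$ cancel at $\mu=0$. This local computation is where the genuine work lies; the remainder is soft.

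\emph{Part (2).} By (1) the limit $M_q:=\lim_{\lambda\to0}\Lr_{q,c}(\lambda,y)\in\D(V)*S_n$ exists. Conjugating \eqref{eduprop} by $w\in S_n$ gives $w\,y_i(\lambda)\,w^{-1}=y_{w(i)}(w\lambda)$, and since $\Lr_{q,c}\in A_0^{S_n}$ this upgrades to $w\,\Lr_{q,c}(\lambda,y)\,w^{-1}=\Lr_{q,c}(w\lambda,y)$; evaluating at the fixed point $\lambda=0$ shows $M_q$ is $S_n$-invariant. Hence $L_q=\Res M_q$ lies in $\D(V)^{S_n}$, because $\Res$ sends $\D(V)*S_n$ into $\D(V)$ and sends an $S_n$-invariant element to an $S_n$-invariant function. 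Multiplicativity $L_qL_{q'}=L_{qq'}$ rests on two ingredients: on $S_n$-invariant elements $\Res$ is an algebra homomorphism, $\Res(AB)=\Res(A)\Res(B)$ (a short computation with $A=\sum_{w}A_w w$, $B=\sum_{v}B_v v$, using that $\Res A$ and $\Res B$ are $S_n$-invariant functions); and the limiting assignment $q\mapsto M_q$ is multiplicative — as in the rational case this comes from the commutativity of the Dunkl operators and the identity $q(\yr_c)q'(\yr_c)=(qq')(\yr_c)$ in $A_0$, the only extra point being to check that the product is preserved in the limit $\lambda\to0$. Finally, $q\mapsto L_q$ is injective: writing $d=\deg q$, the order-$d$ part of $\Lr_{q,c}(\lambda,y)$ is $q(\hbar\partial_1,\dots,\hbar\partial_n)$, which lies in the identity component of $\D(V)*S_n$, all other components having order $<d$; neither $\Res$ nor the limit $\lambda\to0$ alters this, so $L_q$ has principal symbol $q(\hbar\xi_1,\dots,\hbar\xi_n)$, whence $L_q=0$ forces $q=0$. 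This proves that $q\mapsto L_q$ is an algebra embedding $\c[V^*]^{S_n}\hookrightarrow\D(V)^{S_n}$.
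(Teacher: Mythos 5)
The paper does not prove this theorem itself: it is quoted from \cite{EFMV} (with the quadratic case illustrated in \S\,3.4 following \cite{BFV}), so your proposal has to be judged on its own merits. Part (2) of your argument (invariance of the limit via equivariance, multiplicativity via the spherical-type property of $\Res$ and commutativity of the $y_i$, injectivity via the principal symbol) is the standard soft reasoning and is fine \emph{granted} part (1). The gap is in part (1), and it sits exactly where you write ``one then checks, by a finite computation\dots that the negative powers of $\mu$ cancel'': this is the entire content of the theorem, and the three facts you propose to use are demonstrably insufficient. They make no use of the $\lambda$-singular coefficients of $\Lr_{q,c}$ itself; indeed you explicitly treat the coefficients near $\Delta_{12}$ as holomorphic. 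But with only ``$S_n$-symmetry of the momentum polynomial + equivariance + commutativity'' the same argument would prove regularity of $\sum_i y_i^2$, which is false: by \eqref{y2} this operator contains $2\kappa^2\sum_{i<j}\wp(\lambda_i-\lambda_j)$ and has a double pole along every diagonal. For $\Hr_c(\lambda,y)$ that pole is cancelled precisely by the coefficient term $-2\kappa^2/(\lambda_1-\lambda_2)^2$ coming from the potential of the \emph{rational} CM integral, i.e.\ by the specific structure of $\Lr_{q,c}$ (equivalently, by the rank-one instance of Lemma~\ref{nores} after degenerating the elliptic Dunkl operators near the wall to rational ones). This is where the genuine work in \cite{EFMV} is done, and your sketch defers exactly this step.

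Two further points. The operator identity you invoke, $s_{12}\,\mu=-\mu\, s_{12}$, is false in $\D(V)*S_n$: the spectral variables are parameters, hence central, and $s_{12}$ acts only on the $x$'s; the equivariance in \eqref{eduprop} relates $y_\xi(\lambda)$ to $y_{w\xi}(w\lambda)$ at \emph{different} values of $\lambda$ and is not an anticommutation relation, so your facts (i) and (ii) must be reformulated before any local computation can even be set up (what one actually uses is, e.g., $s_{12}\phi(\mu,z)s_{12}=\phi(\mu,-z)$ together with group relations such as $s_{12}s_{1j}=s_{2j}s_{12}$ to kill the simple poles, and the explicit singular coefficients of $\Lr_{q,c}$ to kill the double poles). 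On the positive side, your reduction of regularity at $\lambda=0$ to regularity at generic points of the individual diagonals via a codimension-two/Hartogs argument is sound and is in the spirit of the second proof of Theorem~3.1 in \cite{EFMV}; what is missing is the actual proof of the local statement, which cannot follow from the ingredients you list.
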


\subsection{} \label{ex}
To illustrate the theorem, let us follow a calculation in \cite{BFV}. Choose $\Lr_{q,c}$ to be the Hamiltonian $\Hr_c$ \eqref{cmc}. In this case,   
\begin{equation*}
\Hr_c(\lambda, y)=y_{1}^2+\dots +y_{n}^2-\sum_{i<j}\frac{2\kappa^2}{(\lambda_i-\lambda_j)^2}\,.
\end{equation*}
By a direct calculation, with the help of the identity $\phi(\mu, z)\phi(\mu, -z)=\wp(\mu)-\wp(z)$,
\begin{multline}\label{y2}
y_{1}^2+\dots +y_{n}^2=\hbar^2\sum_{i=1}^n\partial_i^2-2\hbar \kappa\sum_{i<j}^n \phi'(\lambda_i-\lambda_j, x_i-x_j)s_{ij}\\+
2\kappa^2\sum_{i<j}^n \left(\wp(\lambda_i-\lambda_j)-\wp(x_i-x_j) \right)\,.
\end{multline} 
Here $\phi'(\mu, z)=\frac{d}{dz}\phi(\mu, z)$. Using that 
\begin{equation*}
\wp(z)=\frac{1}{z^2}+o(z)\quad\text{as}\ z\to 0\,,\qquad \lim_{\mu\to 0}\phi'(\mu, z)=-\wp(z)\,,
\end{equation*}
one finds that
\begin{equation}\label{y2r}
\lim_{\lambda\to 0}\Hr_c(\lambda, y) = \hbar^2\sum_{i=1}^n\partial_i^2-\sum_{i<j}^n 2\kappa(\kappa-\hbar s_{ij})\wp(x_i-x_j)\,.
\end{equation}
Applying the map \eqref{res} gives the Hamiltonian \eqref{ecm}:
\begin{equation}\label{ecal}
\Res \lim_{\lambda\to 0} \Hr_c(\lambda, y)=H\,.
\end{equation}

\begin{remark}
For the Calogero--Moser particles, their total momentum $$P=p_1+\dots + p_n$$ is conserved. Replacing $p_i$ with the Dunkl operators $y_i$, one finds that
\begin{equation}
y_1+\dots +y_n= \hbar\partial_1+\dots +\hbar\partial_n\,,
\end{equation}  
reflecting conservation of the total momentum in the quantum system.
\end{remark}

\subsection{} 
By passing to the classical limit in the above constructions, one obtains Hamiltonians for the classical elliptic CM system. Namely, we think of the operators \eqref{yie} as elements of $\Ah*S_n$, so their classical limit are the following commuting elements of $A_0*S_n$:
\begin{equation}\label{yie}
y_{i,c}=p_i-\kappa\sum_{j\ne i}^n \phi(\lambda_i-\lambda_j, x_i-x_j)s_{ij}\,.
\end{equation} 
The classical limit of \eqref{ehomex} is the map
 \begin{equation}\label{ehomexc}
\c(V)\otimes \c[V^*] \to A_0*S_n \ ,\quad f\mapsto f(\lambda, y_c)\,.
\end{equation}
\begin{theorem}[\cite{EFMV}]\label{efmvc}
For $q\in\c[V^*]^{S_n}$, consider $\Lr_{q,c}\in A_0^{S_n}$. Identify $A_0$ with $\c(V)\otimes 
\c[V^*]$ in\eqref{ehomexc} and obtain $\Lr_{q,c}(\lambda, y_c)\in A_0*W$ by applying \eqref{ehomexc}, i.e. by replacing classical momenta in $\Lr_{q,c}$ with classical elliptic Dunkl operators. 

$(1)$ The elements $\Lr_{q,c}(\lambda, y_c)$ are regular near $\lambda=0$ and so have a well-defined limit at $\lambda=0$. 

$(2)$ Setting $L_{q, c}:=\Res \lim_{\lambda\to 0}\,\Lr_{q,c}(\lambda, y_c)$ defines a Poisson-commutative subalgebra $\{L_{q,c},\,\, q\in\c[V^*]^{S_n}\}$ in $A_0^{S_n}$.

$(3)$ In particular, similarly to \eqref{y2r},
\begin{equation*}
\Res \lim_{\lambda\to 0}\Hr_c(\lambda, y_c) = \sum_{i=1}^n p_i^2-2\kappa^2\sum_{i<j}^n \wp(x_i-x_j)\,, 
\end{equation*}
which is the Hamiltonian \eqref{ecmc}.
\end{theorem}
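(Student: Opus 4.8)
The plan is to derive Theorem~\ref{efmvc} from its quantum counterpart, Theorem~\ref{efmv}, by applying the classical-limit homomorphism $\eta_0$ of \eqref{eta}. First I would record the elementary fact that $\eta_0$ sends the quantum elliptic Dunkl operators $y_i=\hat p_i-\kappa\sum_{j\ne i}\phi(\lambda_i-\lambda_j,x_i-x_j)s_{ij}$ to the classical ones $y_{i,c}$, and hence intertwines the substitution maps \eqref{ehomex} and \eqref{ehomexc}. Since the $y_i$ pairwise commute by \eqref{eduprop} and the coefficients of $\Lr_{q,c}$, under the identification of $A_0$ with $\c(V)\otimes\c[V^*]$, depend only on the spectral variables $\lambda$ — on which $\eta_0$ acts as the identity — the substituted element $\Lr_{q,c}(\lambda,y)$ is unambiguous, lies in $\Ah*S_n$ once one mildly enlarges $\Ah$ to allow meromorphic coefficients in $\lambda$ (no negative powers of $\hbar$ occur, since $y_i$ is polynomial in the $\hat p_i$), and satisfies $\eta_0\bigl(\Lr_{q,c}(\lambda,y)\bigr)=\Lr_{q,c}(\lambda,y_c)$.

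Granting this, parts (1)--(3) are essentially formal. For (1): by Theorem~\ref{efmv}(1) the coefficients of $\Lr_{q,c}(\lambda,y)$ are regular at $\lambda=0$, and since $\eta_0$ does not touch the $\lambda$-dependence the same holds for $\Lr_{q,c}(\lambda,y_c)$; moreover $\eta_0$ commutes with $\lim_{\lambda\to0}$ and with $\Res$ (these act on disjoint ``factors''), so $L_{q,c}=\eta_0(L_q)$. For (2): Theorem~\ref{efmv}(2) makes $q\mapsto L_q$ an algebra homomorphism whose image is commutative (because $\c[V^*]^{S_n}$ is), so $[L_q,L_{q'}]=0$ exactly; applying $\eta_0$ gives $L_{q,c}L_{q',c}=\eta_0(L_qL_{q'})=L_{qq',c}$, so $\{L_{q,c}\}$ is a subalgebra of $A_0^{S_n}$, and $\{L_{q,c},L_{q',c}\}=\eta_0\bigl(\hbar^{-1}[L_q,L_{q'}]\bigr)=0$, so it is Poisson commutative. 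For (3): applying $\eta_0$ to \eqref{y2r}, the term $\hbar^2\sum_i\partial_i^2=\sum_i\hat p_i^2$ becomes $\sum_i p_i^2$, while $2\kappa\hbar\sum_{i<j}s_{ij}\wp(x_i-x_j)$ is a multiple of $\hbar$ and vanishes, leaving $\sum_i p_i^2-2\kappa^2\sum_{i<j}\wp(x_i-x_j)$, which already lies in $A_0$ so that $\Res$ is superfluous; this is \eqref{ecmc}. (Equivalently, the classical limit of \eqref{y2} gives $y_{1,c}^2+\dots+y_{n,c}^2=\sum_i p_i^2+2\kappa^2\sum_{i<j}\bigl(\wp(\lambda_i-\lambda_j)-\wp(x_i-x_j)\bigr)$, and $\wp(\mu)-\mu^{-2}\to0$ as $\mu\to0$ then yields the $\lambda\to0$ limit.)

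I expect the only real care to be bookkeeping: setting up $\eta_0$ on the completion that allows $\lambda$-dependence, and checking that it commutes with $\lim_{\lambda\to0}$ and with $\Res$ — after which the deduction from Theorem~\ref{efmv} is formal. Were one instead to argue from scratch, the substantive point would be part (1), the regularity of $\Lr_{q,c}(\lambda,y_c)$ at $\lambda=0$; here one would repeat the analysis of \cite{EFMV}, using the expansion $\phi(\mu,z)=-\mu^{-1}+\zeta(z)+O(\mu)$ as $\mu\to0$ to control the classical Dunkl operators near $\lambda=0$ — the leading singular part of $y_{i,c}$ being $\kappa\sum_{j\ne i}(\lambda_i-\lambda_j)^{-1}s_{ij}$, i.e. the group-algebra part of the rational classical Dunkl operators in the spectral variables — together with the $S_n$-invariance of $q$ to produce the required cancellations, as in the rational situation underlying Lemma~\ref{nores}.
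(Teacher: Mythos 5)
The paper itself does not prove Theorem \ref{efmvc}: it is quoted from \cite{EFMV}, where the classical statement is established directly by analysing the classical elliptic Dunkl operators near $\lambda=0$ (essentially the route you only sketch in your final paragraph: the expansion $\phi(\mu,z)=-\mu^{-1}+\zeta(z)+O(\mu)$, the structure of the rational Hamiltonians $\Lr_{q,c}$, and the invariance-based cancellations underlying Lemma \ref{nores}). Your main argument instead deduces the classical theorem from the quantum Theorem \ref{efmv} by extracting the $\hbar^{0}$-part via $\eta_0$, and it is sound: the quantum Dunkl operators are linear in $\hbar$, so $\Lr_{q,c}(\lambda,y)$ is polynomial in $\hbar$ with $\lambda$-dependent coefficients, hence regularity at $\lambda=0$ (valid for all, equivalently formal, $\hbar$) passes to each $\hbar$-coefficient; $\eta_0$ commutes with $\lim_{\lambda\to 0}$ and with $\Res$, giving $L_{q,c}=\eta_0(L_q)$; multiplicativity of $q\mapsto L_q$ gives the subalgebra property, and $[L_q,L_{q'}]=0$ identically in $\hbar$ forces $\{L_{q,c},L_{q',c}\}=\eta_0\bigl(\hbar^{-1}[L_q,L_{q'}]\bigr)=0$; part (3) is the classical limit of \eqref{y2}--\eqref{y2r}, as you say. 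This is very much in the spirit of the paper's own narrative (``passing to the classical limit in the above constructions''), but two points deserve explicit care: (i) the deduction needs Theorem \ref{efmv} with $\hbar$ treated as a varying or formal parameter rather than one fixed value -- harmless here precisely because of the polynomial $\hbar$-dependence, but it is the step hiding behind ``$\eta_0$ does not touch the $\lambda$-dependence''; (ii) you invert the logical order of \cite{EFMV}, where the classical case has an independent direct proof, so one should note that the quantum input is not itself obtained from the classical statement (it is not, so there is no circularity). The trade-off is clear: your route is short and purely formal once Theorem \ref{efmv} is granted, while the direct argument of \cite{EFMV} establishes the classical regularity and Poisson commutativity without invoking the quantum theory at all.
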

 
\subsection{}\label{hsf}
For later use, let us consider what happens when we substitute elliptic Dunkl operators into \emph{elliptic} Hamiltonians (rather than {rational} ones).  
\begin{prop}[Proposition 5.1 \cite{C19}]\label{prop19}
For $q\in\c[V^*]^{S_n}$, let $L_{q,c}\in A_0^{S_n}$ be the classcal Hamiltonian from Theorem \ref{efmvc}.  Then: 

$(1)$ $L_{q,c}(\lambda, y)$ is regular near $\lambda=0$, 

$(2)$ $L_{q,c}(\lambda, y_c)\in A_0*S_n$ does not depend on $\lambda$. Furthermore, expanding it into $\sum_{w\in S_n} a_ww$, we have $a_w=0$ for $w\ne\id$, i.e. $L_{q,c}(\lambda, y_c)\in A_0\subset A_0*S_n$. 
\end{prop}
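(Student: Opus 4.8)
\noindent\emph{Proposed approach.}
The plan is to deduce both statements from the construction of $L_{q,c}$ in Theorem~\ref{efmvc}, after reducing to power‑sum generators and using the addition‑type identities for $\phi$. Since the operators $y_i$ pairwise commute, the assignment $G\mapsto G(\lambda,y)$ is an algebra homomorphism $A_0=\c(V)\otimes\c[V^*]\to\D(V)*S_n$, and likewise $G\mapsto G(\lambda,y_c)$ is a homomorphism $A_0\to A_0*S_n$. On the other hand $q\mapsto L_q$ is an algebra embedding by Theorem~\ref{efmv}(2), so passing to the classical limit (which is an algebra isomorphism) gives $L_{qq',c}=L_{q,c}L_{q',c}$; hence $L_{qq',c}(\lambda,y)=L_{q,c}(\lambda,y)\,L_{q',c}(\lambda,y)$, and similarly with $y_c$. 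A product of two differential operators each regular near $\lambda=0$ is again regular, and a product of two $\lambda$-independent elements of $A_0$ is again $\lambda$-independent and lies in $A_0$; since $\c[V^*]^{S_n}$ is generated as an algebra by $q_k=e_1^k+\dots+e_n^k$, $k=1,\dots,n$, it therefore suffices to prove (1) and (2) for $q=q_k$.

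For $q=q_k$ I would first record the elliptic analogue of Lemma~\ref{nores} (established by a similar reduction): the element $\Lr_{q,c}(\mu,y_c)$, a priori in $A_0*S_n$, in fact lies in $A_0$, so that $L_{q,c}=\lim_{\mu\to 0}\Lr_{q,c}(\mu,y_c)$ with no need for $\Res$. Substitution into a fixed polynomial is linear in the coefficients and hence commutes with $\lim_{\mu\to 0}$, so that $L_{q,c}(\lambda,y_c)=\lim_{\mu\to 0}\bigl(\Lr_{q,c}(\mu,y_c)\bigr)(\lambda,y_c)$, which reduces everything to this iterated substitution. Expanding $\Lr_{q,c}(\mu,y_c)$ in powers of the momenta, its coefficients are built from $(\mu_i-\mu_j)^{-2}$ and from the elliptic functions produced by powers of the $y_{i,c}(\mu)$; the governing identity is $\phi(\mu,z)\phi(\mu,-z)=\wp(\mu)-\wp(z)$, together with its consequences for higher symmetric combinations, $\lim_{\mu\to 0}\phi'(\mu,z)=-\wp(z)$, and $\wp(z)-z^{-2}\to 0$, exactly as in \eqref{y2}. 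Inserting $y_{i,c}(\lambda)$ for the momenta and $\lambda$ for the coordinates, the $\wp(\lambda_i-\lambda_j)$-terms produced by the powers of $y_{i,c}(\lambda)$ cancel against those coming from the substitution $x\mapsto\lambda$ in the elliptic potential; all dependence on $\lambda$ and all group‑algebra contributions drop out, and letting $\mu\to 0$ leaves precisely $L_{q,c}$. This proves (2), in the sharper form $L_{q,c}(\lambda,y_c)=L_{q,c}$; for $q=q_2$ it is \eqref{y2} with $\hbar=0$, and for $q=q_1$ it is the classical shadow of $y_1+\dots+y_n=\hbar(\partial_1+\dots+\partial_n)$, namely $\sum_i y_{i,c}=p_1+\dots+p_n$.

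Part (1) follows by running the same computation in $\D(V)*S_n$ rather than in $A_0*S_n$: the only new feature is that powers of the quantum operators $y_i(\lambda)$ additionally produce $\hbar$-linear terms of the form $\phi'(\lambda_i-\lambda_j,x_i-x_j)s_{ij}$, which are regular at $\lambda=0$ because $\phi'(\mu,z)$ has a finite limit as $\mu\to 0$, while the a priori singular $\wp(\lambda_i-\lambda_j)$-contributions cancel just as in the classical case. The step I expect to be the main obstacle is precisely this cancellation: verifying, for each generator $q_k$, that the symmetric combination $\sum_i y_{i,c}(\lambda)^k$ (resp.\ $\sum_i y_i(\lambda)^k$) produces exactly the $\wp(\lambda_i-\lambda_j)$-terms needed to annihilate those coming from the elliptic potential — equivalently, that the two Dunkl substitutions compose correctly — and, relatedly, establishing the elliptic analogue of Lemma~\ref{nores} that lets the $\mu\to 0$ limit be taken inside $A_0$. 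Both ultimately rest on the identities for $\phi$ and a filtration by the number of momenta, but the combinatorics of the higher symmetric combinations of $\phi$ is the delicate point.
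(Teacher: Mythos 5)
The paper offers no internal proof to compare with: Proposition \ref{prop19} is imported verbatim from \cite[Proposition 5.1]{C19}. Judged on its own terms, your reduction scaffolding is sound: since the elliptic Dunkl operators commute and functions of $\lambda$ are central in $\D(V)*S_n$ (resp.\ $A_0*S_n$), substitution is an algebra homomorphism; multiplicativity of $q\mapsto L_q$ (Theorem \ref{efmv}) descends under $\eta_0$ to $L_{qq',c}=L_{q,c}L_{q',c}$, so it does suffice to treat the power sums $q_k$; and your low-degree checks for $q_1$ and $q_2$, including the sharper identity $L_{q,c}(\lambda,y_c)=L_{q,c}$, are correct.

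The genuine gap is that for the generators $q_k$ with $k\ge 3$ the two decisive claims are asserted rather than proved, and they are precisely the content of the proposition. These are: (i) your ``elliptic analogue of Lemma \ref{nores}'', i.e.\ that $\Lr_{q_k,c}(\mu,y_c)$ has no reflection components, and (ii) that after inserting $(\lambda,y_c(\lambda))$ (resp.\ $(\lambda,y(\lambda))$) all $\lambda$-dependence and all group-algebra terms cancel, and, in the quantum case, that the a priori poles along $\lambda_i=\lambda_j$ cancel. Neither follows in any routine way from the two identities you invoke, $\phi(\mu,z)\phi(\mu,-z)=\wp(\mu)-\wp(z)$ and $\lim_{\mu\to 0}\phi'(\mu,z)=-\wp(z)$: already in the cubic case (Section 6 of the paper) one needs Fay's identity and its ``higher analogues''; individual quantum terms such as $\phi(\lambda_k-\lambda_j,x_k-x_i)$ have simple poles at $\lambda_k=\lambda_j$, so regularity at $\lambda=0$ is a nontrivial cancellation and not merely the boundedness of $\phi'$; and the identity component of a word in several factors $\phi_{ab}s_{ab}$ is no longer of the simple shape $\wp(\lambda_i-\lambda_j)-\wp(x_i-x_j)$, so the ``cancel against the elliptic potential'' mechanism of the quadratic case does not extend by inspection. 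Since no closed formula for $\Lr_{q_k,c}(\mu,y_c)$ is available for general $k$, this is exactly where the computational route stalls — which is presumably why the literature argues indirectly: the analogous regularity in Theorem \ref{sphham} is imported from the second (Cherednik-algebra) proof of \cite[Theorem 3.1]{EFMV}, and the present statement from \cite{C19}. Your proposal would need to supply the general cancellation argument for each $q_k$, the very step you flag as the main obstacle, so as it stands it is an outline rather than a proof.
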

Note that part $(2)$ of the proposition is an elliptic analogue of Lemma \ref{nores}. Now, viewing $L_{q,c}(\lambda, y)$ as elements of $\Ah*S_n$, denote 
\begin{equation}\label{hq}
\Hsf_q:=\Res \lim_{\lambda\to 0}L_{q,c}(\lambda, y)\,,\qquad q\in\c[V^*]^{S_n} .
\end{equation}
Clearly, $\Hsf_q$ belongs to $A_h^{S_n}$ and commutes with all higher CM Hamiltonians\footnote{In fact, one can show that $\Hsf_q=L_{q'}$ for some $q'\in\c[V^*]^{S_n}$, but we will not need this.}.
\begin{cor}\label{qcl}
For $q\in\c[V^*]^{S_n}$, we have 
\begin{equation}\label{exp}
\lim_{\lambda\to 0}L_{q,c}(\lambda, y)=\Hsf_q+\hbar\Asf_q\,,\qquad\text{for some}\ \Asf_q\in (\Ah*S_n)^{S_n}\,.
\end{equation} 
\end{cor}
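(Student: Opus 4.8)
The plan is to identify $\Asf_q$ with the element $\hbar^{-1}\bigl(\lim_{\lambda\to0}L_{q,c}(\lambda,y)-\Hsf_q\bigr)$ and to verify, in turn, that this is a genuine element of $\Ah*S_n$ and that it is $S_n$-invariant under conjugation. For the first point I would show that the ``group part'' of $\lim_{\lambda\to0}L_{q,c}(\lambda,y)$ is divisible by $\hbar$. Indeed, regarding $L_{q,c}(\lambda,y)$ as an element of $\Ah*S_n$, its reduction modulo $\hbar$ is $L_{q,c}(\lambda,y_c)$, because $\eta_0$ is an algebra map sending the quantum elliptic Dunkl operators $y_i$ to the classical ones $y_{i,c}$ and $L_{q,c}(\lambda,\cdot)$ is a polynomial expression in the Dunkl operators. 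By part~$(1)$ of Proposition~\ref{prop19}, $L_{q,c}(\lambda,y)$ is regular near $\lambda=0$, hence has a well-defined value there; since $\eta_0$ does not involve $\lambda$, it commutes with evaluation at $\lambda=0$, so the reduction modulo $\hbar$ of $\lim_{\lambda\to0}L_{q,c}(\lambda,y)$ equals $\lim_{\lambda\to0}L_{q,c}(\lambda,y_c)=L_{q,c}(\lambda,y_c)$, which by part~$(2)$ of Proposition~\ref{prop19} lies in $A_0\subset A_0*S_n$. Writing $\lim_{\lambda\to0}L_{q,c}(\lambda,y)=\sum_{w\in S_n}a_w w$ with $a_w\in\Ah$, this says exactly that $a_w\in\hbar\Ah$ for every $w\ne\id$; put $a_w=\hbar b_w$.

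Now $\Hsf_q=\Res\lim_{\lambda\to0}L_{q,c}(\lambda,y)=\sum_{w}a_w$, and, embedding $\Ah\hookrightarrow\Ah*S_n$ via $a\mapsto a\cdot\id$, one computes
\begin{equation*}
\lim_{\lambda\to0}L_{q,c}(\lambda,y)-\Hsf_q=\sum_{w\ne\id}a_w(w-\id)=\hbar\sum_{w\ne\id}b_w(w-\id),
\end{equation*}
so $\Asf_q:=\sum_{w\ne\id}b_w(w-\id)$ is a well-defined element of $\Ah*S_n$ and the identity \eqref{exp} holds.

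It remains to check that $\Asf_q$ is invariant under conjugation by $S_n$. Since $\hbar$ is central, it suffices to see that both $\Hsf_q$ and $\lim_{\lambda\to0}L_{q,c}(\lambda,y)$ are conjugation-invariant. The first is recorded just before the statement ($\Hsf_q\in\Ah^{S_n}$). For the second I would use the equivariance relation $w\,y_\xi(\lambda)=y_{w\xi}(w\lambda)\,w$ from \eqref{eduprop} together with the $S_n$-invariance of $L_{q,c}$ as an element of $A_0$: these combine to give $w\,L_{q,c}(\lambda,y)\,w^{-1}=L_{q,c}(w\lambda,y)$, i.e.\ conjugation by $w$ merely permutes the spectral variables. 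Letting $\lambda\to0$, and using again that $L_{q,c}(\lambda,y)$ is regular at $\lambda=0$ so that the limit is just the value at $\lambda=0$ and $w\cdot0=0$, we obtain $w\,\bigl(\lim_{\lambda\to0}L_{q,c}(\lambda,y)\bigr)w^{-1}=\lim_{\lambda\to0}L_{q,c}(\lambda,y)$, as needed. I expect the only delicate step to be this last one: making precise that the substitution $\lambda\mapsto w\lambda$ produced by conjugation leaves the $\lambda\to0$ limit unchanged, which rests on the regularity in Proposition~\ref{prop19}$(1)$; everything else is bookkeeping with the reduction modulo $\hbar$.
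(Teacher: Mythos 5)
Your proposal is correct and follows essentially the same route as the paper: the key point in both is Proposition~\ref{prop19}$(2)$, which forces the classical limit of $\lim_{\lambda\to 0}L_{q,c}(\lambda,y)$ to coincide with that of $\Hsf_q=\Res\lim_{\lambda\to 0}L_{q,c}(\lambda,y)$, so the difference is divisible by $\hbar$. Your componentwise bookkeeping ($a_w\in\hbar\Ah$ for $w\ne\id$) and the explicit verification of $S_n$-invariance via the equivariance \eqref{eduprop} are just more detailed versions of what the paper leaves implicit.
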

Indeed, the classical limit of the l.h.s.~is $\lim_{\lambda\to 0}L_{q,c}(\lambda, y_c)=\Res\lim_{\lambda\to 0}L_{q,c}(\lambda, y_c)$ (the last equality is by part $(2)$ of Prop. \ref{prop19}). Hence, $\lim_{\lambda\to 0}L_{q,c}(\lambda, y)$ and $\Hsf_q$ have the same classical limit. \qed

\section{Spin Calogero--Moser system}\label{spincm}
The Dunkl operators can be used to construct a spin generalisation of the elliptic Calogero--Moser system. For the trigonometric case, see \cite{P1, BGHP, Che}.  
\subsection{} 
Consider the Hamiltonian  
\begin{equation}\label{spin}
\widehat H=\sum_{i=1}^n\hbar^2\partial_i^2-2\sum_{i<j}^n\wp(x_i-x_j)\kappa(\kappa-\hbar\widehat {s}_{ij})\,.
\end{equation} 
This is viewed as an element of the algebra $\D(V)\otimes \c S_n$, with the tensor-product sign omitted and with the ``hat'' symbol used to distinguish this from the crossed product $\D(V)*S_n$. Choosing an $S_n$-module $\U$ makes $\widehat H$ a matrix-valued differential operator, i.e. an element of $\D(V)\otimes \End_\c\U$, with $\widehat  s_{ij}$ acting on $\U$ (but not on $\c(V)$). For example, we may choose $\U=U^{\otimes n}$, $U=\c^m$, with $\widehat s_{ij}$ acting as the permutation $P_{ij}$ of the tensor factors. Our goal is to construct a commutative subalgebra of $\D(V)\otimes \c S_n$ containing $\widehat H$.

\subsection{} 
We begin with considerations as in \cite{P1, BGHP, Che}, rephrased for convenience. Let us enlarge the algebra $\D(V)*S_n$ 
by adding another copy of the group algebra:
\begin{equation*}
\A:=(\D(V)*S_n)\otimes \c S_n\,.
\end{equation*}  	
Elements of $\A$ can be uniquely written as
\begin{equation*}
a=\sum_{w,w'\in S_n}a_{ww'}w\otimes\widehat {w'}\quad\text{with}\ a_{ww'}\in\D(V)\,.
\end{equation*}
Define a linear map 
\begin{equation}\label{reshat}
\widehat{\Res}:\ \A\,\to\, \D(V)\otimes \c S_n\,,\qquad \sum_{w,w'\in S_n}a_{ww'}w\otimes\widehat {w'}
\mapsto \sum_{w,w'\in S_n}a_{ww'}\otimes\widehat {w'w^{-1}}\,. 
\end{equation}
Equivalently, $\widehat{\Res}(a)$ is the unique element $L_a\in\D(V)\otimes \c S_n$ such that
\begin{equation}
a\eh=L_a\eh\,,\qquad \eh:=\frac{1}{n!}\sum_{w\in S_n} w\otimes \widehat w\,.
\end{equation}
We have an $S_n$-action on $\A$ and $\D(V)\otimes \c S_n$ by conjugation, 
\begin{equation}\label{ac}
a\mapsto (w\otimes \widehat w)a (w^{-1}\otimes \widehat{w^{-1}})\qquad\forall\ w\in S_n\,.
\end{equation}
It is easy to check that the map $\widehat{\Res}$ is $S_n$-equivariant with respect to this action.   
\begin{lemma}\label{alg}
The restriction $\widehat{\Res}\,:\, \A^{S_n}\to (\D(V)\otimes \c S_n)^{S_n}$ of the map \eqref{reshat} is an algebra homomorphism.
\end{lemma}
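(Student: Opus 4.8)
The plan is to exploit the idempotent $\eh$ together with the characterisation of $\widehat{\Res}(a)=L_a$ as the \emph{unique} element of $\D(V)\otimes\c S_n$ with $a\eh=L_a\eh$. First I would record three elementary facts. (i) $\eh$ is idempotent, $\eh^2=\eh$, by the usual count of pairs $(u,v)$ of permutations with $uv=w$. (ii) Every $a\in\A^{S_n}$ commutes with $\eh$: by definition $\A^{S_n}$ is the set of elements commuting with each $w\otimes\widehat w$, and $\eh$ is their average. (iii) $\D(V)\otimes\c S_n$ embeds in $\A$ as a subalgebra via $\D(V)\hookrightarrow\D(V)*S_n$ in the first tensor factor; here one must keep track that the first copy of $S_n$ acts on $\D(V)$ while the second does not, so that this embedded copy of $\D(V)\otimes\c S_n$ really is closed under the multiplication of $\A$. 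In particular, products such as $\widehat{\Res}(a)\,\widehat{\Res}(b)$ may be formed inside $\A$.

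Linearity of $\widehat{\Res}$ and the fact that it sends the unit of $\A$ to the unit of $\D(V)\otimes\c S_n$ are immediate from \eqref{reshat}, so the substance is multiplicativity. Fix $a,b\in\A^{S_n}$ and write $L_a=\widehat{\Res}(a)$, $L_b=\widehat{\Res}(b)$, $L_{ab}=\widehat{\Res}(ab)$, so that $L_a\eh=a\eh$, etc. Using (ii) in the form $a\eh=\eh a$, I would first establish the ``absorption'' identity
\[
\eh L_a\eh=\eh(a\eh)=\eh(\eh a)=\eh a=L_a\eh ,
\]
and likewise $\eh L_b\eh=L_b\eh$. Then, since $\A^{S_n}$ is a subalgebra (so $ab\in\A^{S_n}$) and $b$ commutes with $\eh$,
\[
(L_aL_b)\eh=L_a(\eh L_b\eh)=(L_a\eh)(L_b\eh)=(a\eh)(b\eh)=a(\eh b)\eh=a(b\eh)\eh=(ab)\eh=L_{ab}\eh .
\]
Hence $L_aL_b$ and $L_{ab}$ are two elements of $\D(V)\otimes\c S_n$ with the same image under right multiplication by $\eh$, so the uniqueness clause in the definition of $\widehat{\Res}$ gives $L_aL_b=L_{ab}$, which is the assertion.

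The single point that genuinely needs checking — and where the real work would lie, had the uniqueness of $L_a$ not already been granted — is that uniqueness, i.e.\ the injectivity of $L\mapsto L\eh$ on $\D(V)\otimes\c S_n$. I would prove it by expanding $L=\sum_{w'}L_{w'}\otimes\widehat{w'}$ and $\eh=\tfrac1{n!}\sum_u u\otimes\widehat u$, carrying out the multiplication in $\A$, and then sorting the outcome according to the value of the second group element: $L\eh=0$ forces each $L_{w'}=0$. Apart from that, the whole argument is bookkeeping with the two copies of $S_n$ and repeated use of $\eh^2=\eh$ and of the commutation of $a,b$ with $\eh$; I expect no genuine obstacle.
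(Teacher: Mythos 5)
Your proof is correct and takes essentially the same route as the paper's: both arguments rest on the defining characterization $a\eh=L_a\eh$, the idempotency of $\eh$, the fact that $S_n$-invariant elements commute with $\eh$, and the uniqueness (injectivity of $L\mapsto L\eh$ on $\D(V)\otimes\c S_n$) to conclude $L_{ab}=L_aL_b$ from $L_{ab}\eh=(a\eh)(b\eh)=(L_a\eh)(L_b\eh)=L_aL_b\eh$. The only cosmetic difference is that you derive the absorption identity $\eh L_a\eh=L_a\eh$ directly from $a\eh=\eh a$, where the paper instead invokes the $S_n$-equivariance of $\widehat{\Res}$ to get $\eh L_a=L_a\eh$, and you additionally spell out the injectivity step that the paper builds into the definition.
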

Indeed, if $a\in \A^{S_n}$ and $L_a=\widehat\Res (a)$, then $a\eh=\eh a$ and $L_a\eh=\eh L_a$. Hence, for $a,b\in \A^{S_n}$ we have
$L_{ab}\eh=ab\eh =(a\eh)(b\eh)= (L_a\eh)(L_b\eh)=L_aL_b\eh$.
\qed

\subsection{} 
On $\D(V)*S_n\subset \A$, the action \eqref{ac} is simply $a\mapsto waw^{-1}$. Thus, restricting \eqref{reshat} further gives an algebra map 
\begin{equation}
\widehat\Res\,:\, (\D(V)*S_n)^{S_n}\to  (\D(V)\otimes \c S_n)^{S_n}\,,\qquad \sum_{w\in S_n}a_ww \mapsto  \sum_{w\in S_n}a_w\otimes \widehat{w^{-1}}\,,
\end{equation}
converting reflection-differential operators to $\c S_n$-valued differential operators. Recall that, by Theorem \ref{efmv}, we have well-defined elements $\lim_{\lambda\to 0} \Lr_{q,c}(\lambda, y)$, for every $q\in\c[V^*]^{S_n}$. Since $ w\Lr_{q,c}(\lambda, y)w^{-1}=\Lr_{q,c}(w\lambda, y)$, the limit at $\lambda=0$ is $S_n$-invariant. Also, since these elements are constructed from commuting Dunkl operators, they pairwise commute. Therefore, we have the following result.
\begin{prop}\label{prham}
For $q\in\c[V^*]^{S_n}$, define $\Lh_q=\widehat\Res \lim_{\lambda\to 0} \Lr_{q,c}(\lambda, y)$. The family $\{ \Lh_q\,,\, q\in\c[V^*]^{S_n}\}$ forms a commutative subalgebra in $(\D(V)\otimes \c S_n)^{S_n}$.
\end{prop}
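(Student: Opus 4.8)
The plan is to first establish the claim ``one level up'', before applying $\widehat{\Res}$: namely, to show that the elements $M_q:=\lim_{\lambda\to 0}\Lr_{q,c}(\lambda, y)$ of $\D(V)*S_n$ (well-defined by Theorem \ref{efmv}(1)) form a commutative subalgebra of $(\D(V)*S_n)^{S_n}$, where $S_n$ acts by conjugation $a\mapsto waw^{-1}$. Granting this, the proposition is immediate: by Lemma \ref{alg} and the remark following it, $\widehat{\Res}$ restricts to an algebra homomorphism $(\D(V)*S_n)^{S_n}\to(\D(V)\otimes\c S_n)^{S_n}$, so $\Lh_q=\widehat{\Res}(M_q)$ lands in $(\D(V)\otimes\c S_n)^{S_n}$ and $\Lh_q\Lh_{q'}=\widehat{\Res}(M_qM_{q'})$, which we will see equals $\Lh_{qq'}$; in particular the $\Lh_q$ commute and are closed under multiplication.

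For the $S_n$-invariance of $M_q$ I would use that $\Lr_{q,c}$ is an $S_n$-invariant element of $A_0$ together with the equivariance $w\,y_\xi(\lambda)=y_{w\xi}(w\lambda)\,w$ from \eqref{eduprop}; these combine to give $w\,\Lr_{q,c}(\lambda, y)\,w^{-1}=\Lr_{q,c}(w\lambda, y)$ for all $w\in S_n$. Letting $\lambda\to 0$ --- permissible by the regularity in Theorem \ref{efmv}(1), and harmless since $w$ fixes $\lambda=0$ --- gives $wM_qw^{-1}=M_q$.

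For commutativity and closure I would exploit Lemma \ref{nores}: because $\Lr_{q,c}=q(\yr_c)$ actually lies in $A_0=\c(V)\otimes\c[V^*]$, the substituted element $\Lr_{q,c}(\lambda, y)$ is a polynomial in the pairwise commuting operators $y_1(\lambda),\dots,y_n(\lambda)$ whose coefficients are scalar functions of $\lambda$, hence central in $\D(V)*S_n$. Thus \eqref{ehomex} is an algebra homomorphism, and since $q\mapsto q(\yr_c)$ is multiplicative this yields $\Lr_{q,c}(\lambda, y)\,\Lr_{q',c}(\lambda, y)=\Lr_{qq',c}(\lambda, y)=\Lr_{q',c}(\lambda, y)\,\Lr_{q,c}(\lambda, y)$. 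Taking $\lambda\to 0$ (all three elements being regular there, so the limit commutes with products) gives $M_qM_{q'}=M_{qq'}=M_{q'}M_q$, which is exactly the commutative-subalgebra statement for the $M_q$, and then also $\Lh_q\Lh_{q'}=\Lh_{qq'}$.

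There is no deep obstacle here: all the hard work sits in the inputs already available --- the regularity of $\Lr_{q,c}(\lambda, y)$ at $\lambda=0$ (Theorem \ref{efmv}) and the commutativity/equivariance of the elliptic Dunkl operators (\eqref{eduprop}). The only thing one must be careful about is bookkeeping: $\widehat{\Res}$ is multiplicative only on the conjugation-invariant subalgebra, so the $S_n$-invariance of the $M_q$ (not just their commutativity) genuinely has to be checked, and the interchange of the limit $\lambda\to 0$ with conjugation and with multiplication has to be justified by regularity rather than taken for granted.
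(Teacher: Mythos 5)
Your proposal is correct and follows essentially the same route as the paper: establish $S_n$-invariance of $\lim_{\lambda\to 0}\Lr_{q,c}(\lambda,y)$ from the equivariance $w\,\Lr_{q,c}(\lambda,y)\,w^{-1}=\Lr_{q,c}(w\lambda,y)$, get commutativity from the commuting elliptic Dunkl operators (with central $\lambda$-dependent coefficients), and then push through the restriction of $\widehat{\Res}$ to $(\D(V)*S_n)^{S_n}$, which is an algebra homomorphism by Lemma \ref{alg}. Your extra bookkeeping (multiplicativity $M_qM_{q'}=M_{qq'}$ via Lemma \ref{nores} and the interchange of the limit with products and conjugation) only makes explicit what the paper leaves implicit.
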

Using our calculations in \ref{ex}, we find
\begin{align}
\Lh_q&=\hbar\partial_1+\dots+\hbar\partial_n\qquad&\text{for}\ q&=e_1+\dots + e_n\,,
\\\label{spcm}
 \Lh_q&=\sum_{i=1}^n \hbar^2\partial_i^2-2\sum_{i<j}^n\wp(x_i-x_j)\kappa(\kappa-\hbar\widehat {s}_{ij})\qquad&\text{for}\ q&=e_1^2+\dots+e_n^2\,.
\end{align}
Hence, the operator \eqref{spin} belongs to the constructed commutative family. 

\begin{remark}
The algebra $\D(V)*S_n$ has a nontrivial centre, spanned by the central idempotents $c_\pi\in \c S_n$, $\pi\in\Irr S_n$. The elements $\widehat{c_\pi}$ commute with any of the spin Hamiltonians. 
\end{remark}

\subsection{} \label{4.4}
Due to extra degrees of freedom in the spin system, one expects more commuting Hamiltonians. Recall the classical spherical subalgebra $\BB_{0, \kappa}$, a Poisson subalgebra of $A_0^{S_n}$, see \ref{cher}. Also, recall the map $A_0\to \D(V)*S_n$, $f\mapsto f(\lambda, y)$, see \eqref{ehomex}. We have the following result strengthening Proposition \ref{prham}. 

\begin{theorem}\label{sphham}
For any $f\in \BB_{0, \kappa}$, the element $f(\lambda, y)\in \D(V)*S_n$ is regular near $\lambda=0$. Define $\Lh_f:=\widehat \Res \lim_{\lambda\to 0} f(\lambda, y)$. Then the family $\{ \Lh_f\,,\, f\in \BB_{0, \kappa}\}$ forms a commutative subalgebra in $(\D(V)\otimes \c S_n)^{S_n}$.
\end{theorem}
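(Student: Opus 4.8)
The plan is to realise $f\mapsto\Lh_f$ as a composite of algebra homomorphisms, so that commutativity of the family $\{\Lh_f\}$ becomes automatic and the only real content is the asserted regularity at $\lambda=0$.

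The starting observation is that $f\mapsto f(\lambda,y)$ is itself an algebra homomorphism. Identifying $A_0$ with $\c(V)\otimes\c[V^*]$ as in Theorem \ref{efmv} — the ``$\c(V)$'' factor recording the spectral variables $\lambda_i$ and the ``$\c[V^*]$'' factor the momenta — the map \eqref{ehomex} sends $\lambda_i\mapsto\lambda_i$ and $p_i\mapsto y_i(\lambda)$; its images commute with one another by \eqref{eduprop} and with the scalars $\lambda_i$, so, over the field $\c(\lambda_1,\dots,\lambda_n)$, \eqref{ehomex} is a homomorphism of algebras. Restricting it to $\BB_{0,\kappa}\subseteq A_0^{S_n}$ we obtain an algebra homomorphism $\Theta\colon\BB_{0,\kappa}\to\D(V)*S_n$ over $\c(\lambda)$, and since $\BB_{0,\kappa}$ is commutative its image is commutative.

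The crux is that $\Theta(f)$ is regular at $\lambda=0$ for every $f\in\BB_{0,\kappa}$. The elements of $\D(V)*S_n$ that are regular at $\lambda=0$ form a subalgebra, and $\Theta$ is linear, so it suffices to verify this on a spanning set of $\BB_{0,\kappa}$. By the PBW theorem for $\HH_{0,\kappa}$ and the identity $ewe=e$, the spherical subalgebra $e\HH_{0,\kappa}e$ is spanned by the elements $e\,m\,e$ with $m$ a monomial in the $x_i$ and the $\yr_{j,c}$ (the group elements being absorbed by the idempotents); hence $\BB_{0,\kappa}$ is spanned by $\Res(e\,m\,e)$. If $m$ contains no $\yr_{j,c}$, then $\Res(e\,m\,e)\in\c[V]^{S_n}$ and $\Theta(\Res(e\,m\,e))$ is a polynomial in $\lambda$, hence regular. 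If $m$ contains no $x_i$, then (the $\yr_{j,c}$ commuting) $\Res(e\,m\,e)$ is a symmetric polynomial in the classical Dunkl operators, so by Lemma \ref{nores} it equals $\Lr_{q,c}$ with $q\in\c[V^*]^{S_n}$, and $\Theta(\Res(e\,m\,e))=\Lr_{q,c}(\lambda,y)$ is regular at $\lambda=0$ by Theorem \ref{efmv}$(1)$. For the remaining, ``mixed'', monomials one carries out the local analysis near the diagonal hyperplanes $\{\lambda_i=\lambda_j\}$ precisely as in \cite{EFMV, C19}: the substitution $x\to\lambda$ turns $\yr_{i,c}$ into $y_i(\lambda)-\kappa\sum_{j\ne i}(\lambda_i-\lambda_j)^{-1}s_{ij}$, and, using $\phi(\mu,z)=-\mu^{-1}+\zeta(z)+O(\mu)$ as $\mu\to0$ (with $\zeta=\sigma'/\sigma$), the poles of $\Theta(\Res(e\,m\,e))$ along $\{\lambda_i=\lambda_j\}$ coming from the elliptic Dunkl operators cancel, order by order, against those produced by the terms $-\kappa\,(x_i-x_j)^{-1}s_{ij}$ inside the $\yr_{i,c}$, and the surviving simple pole along each diagonal vanishes because $\Res(e\,m\,e)$ is $s_{ij}$-invariant (conjugation by $s_{ij}$ identifies the regular parts of $y_i$ and $y_j$ on that diagonal). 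Since the only possible polar locus of $\Theta(\Res(e\,m\,e))$ near the origin is contained in $\bigcup_{i<j}\{\lambda_i=\lambda_j\}$, normality of $\c[\lambda_1,\dots,\lambda_n]$ then forces regularity at $\lambda=0$. This cancellation for the mixed monomials is the only non-formal step.

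Granting regularity, the rest is formal. By \eqref{eduprop} one has $w\,\Theta(f)\,w^{-1}=f(w\lambda,y)$ for $f\in\BB_{0,\kappa}\subseteq A_0^{S_n}$, so $\lim_{\lambda\to0}\Theta(f)$ is $S_n$-invariant; and $f\mapsto\lim_{\lambda\to0}\Theta(f)$ — that is, $\Theta$ (which, by the above, takes values in the subalgebra of elements regular at $\lambda=0$) followed by specialisation at $\lambda=0$ — is an algebra homomorphism $\BB_{0,\kappa}\to(\D(V)*S_n)^{S_n}$. Composing with the $S_n$-equivariant algebra embedding $(\D(V)*S_n)^{S_n}\hookrightarrow\A^{S_n}$, $a\mapsto a\otimes1$, and then with $\widehat{\Res}$, which is an algebra homomorphism on $\A^{S_n}$ by Lemma \ref{alg}, we conclude that $f\mapsto\Lh_f=\widehat{\Res}\bigl((\lim_{\lambda\to0}\Theta(f))\otimes1\bigr)$ is a homomorphism of algebras. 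Hence $\{\Lh_f\colon f\in\BB_{0,\kappa}\}$ is a subalgebra of $(\D(V)\otimes\c S_n)^{S_n}$, and it is commutative because $\BB_{0,\kappa}$ is. The only substantive point is the regularity statement of the third paragraph; everything else follows formally from the results already in place.
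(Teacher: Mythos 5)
Your proposal is correct and follows essentially the same route as the paper: the only non-formal ingredient, the regularity of $f(\lambda,y)$ at $\lambda=0$ for $f\in \BB_{0,\kappa}$, is in both cases delegated to the local analysis near the diagonals $\{\lambda_i=\lambda_j\}$ from the second proof of Theorem 3.1 in \cite{EFMV} (Sec.~5.3 there), while $S_n$-invariance and commutativity follow formally from the equivariance and commutativity of the elliptic Dunkl operators together with Lemma \ref{alg}. Your additional reduction to monomial spanning elements of $e\,\HH_{0,\kappa}\,e$ is harmless but not needed (the cited argument applies verbatim to every $f\in\BB_{0,\kappa}$), and your one-line heuristic for the pole cancellation is only a sketch of that analysis, exactly as in the paper's own proof.
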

The fact that $f(\lambda, y)$ are regular near $\lambda=0$ for any $f\in \BB_{0,\kappa}$ follows from the second proof of \cite[Theorem 3.1]{EFMV}. To be precise, in {\it loc. cit.} this is shown for $f=\Lr_{q,c}$, cf. Theorem \ref{efmv}. However, the proof applies verbatim to any $f\in \BB_{0, \kappa}$, see \cite[Sec. 5.3]{EFMV} for the details. The remaining statements are clear.  \qed

\subsection{}\label{4.5}
We can also substitute Dunkl operators into elliptic Hamiltonians (cf. Prop. \ref{prop19}) and obtain
\begin{equation}\label{hqspin}
\Hsfh_q:=\widehat\Res\lim_{\lambda\to 0} L_{q,c}(\lambda, y)\,,\qquad q\in\c[V^*]^{S_n}\,.
\end{equation}
These obviously commute between themselves and with any of the Hamiltonians 
$\Lh_f$, $f\in \BB_{0, \kappa}$.\footnote{One can show that, in fact, $\Hsf_q=\Lh_f$ for some $f\in \BB_{0, \kappa}$, but we will not need this result.} We will refer to the elements $\Lh_f$, $f\in \BB_{0, \kappa}$ as higher spin CM Hamiltonians, and to $\Hsfh_q$, $q\in\c[V^*]^{S_n}$ as \emph{principal} spin CM Hamiltonians.

\section{Inozemtsev quantum spin chain}\label{in}
We are now in a position to construct commuting Hamiltonians for the Inozemtsev spin chain \eqref{ham}. They will be obtained from the spin CM Hamiltonians by taking classical limits and evaluating at an equilibrium. In particular, this will provide a full justification for the ``freezing'' procedure \cite{P}. 
\subsection{} Consider the classical Hamiltonian $H$ \eqref{ecmc}, assuming $x_i, p_i\in \R$. 
Suppose $\tau\in \mathrm{i}\R_+$, so $\wp(z)$ is real for real $z$. The following fact is well known (it follows easily from the convexity of $\wp(z)$ for $0<z<1$).
\begin{lemma}  
In the region $x_1<\dots <x_n<x_1+1$, all possible equilibria $(x,p)$ of $H$ are of the form $x_i=c+\frac{i}{n}$ (with arbitrary $c\in\R$), $p_i=0$ ($i=1,\dots, n$). 
\end{lemma}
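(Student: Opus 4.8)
The plan is to characterise equilibria as critical points of the potential $U(x)=-2\kappa^2\sum_{i<j}\wp(x_i-x_j)$ on the open region $\Omega=\{x_1<\dots<x_n<x_1+1\}$, since at an equilibrium one certainly needs $p_i=0$ (the kinetic term $\sum p_i^2$ is minimised there and $\partial H/\partial x_i=0$ forces $\dot p_i=0$ while $\dot x_i=2p_i=0$). So the real content is: the only critical point of $U$ on $\Omega$, up to the overall translation $x\mapsto x+c\mathbf 1$, is the equidistant configuration $x_i=c+i/n$. First I would quotient out the translation by fixing, say, $x_1+\dots+x_n=0$, or equivalently work with the gap variables $t_i=x_{i+1}-x_i>0$ for $i=1,\dots,n-1$ together with $t_n=x_1+1-x_n>0$, which satisfy the single linear constraint $t_1+\dots+t_n=1$.

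Next I would write down the critical point equations. Differentiating, $\partial U/\partial x_k = -2\kappa^2\sum_{j\ne k}\wp'(x_k-x_j)$, and using that $\wp'$ is odd and $1$-periodic one rewrites the system in terms of the $t_i$. The key structural input is the convexity of $\wp$ on the real interval $(0,1)$: since $\tau\in\mathrm i\R_+$, $\wp(z)$ is real, even, $1$-periodic, blows up to $+\infty$ at $z\to 0^+$ and $z\to 1^-$, and $\wp''(z)=6\wp(z)^2-\tfrac12 g_2>0$ on $(0,1)$ — more precisely $\wp$ is strictly convex there, which is the standard fact the lemma alludes to. Therefore $-\wp(x_i-x_j)$, viewed as a function of the differences, is strictly concave; summing, $U$ restricted to the (convex) region $\Omega$ modulo translations is a strictly concave function that tends to $-\infty$ on the boundary (where some gap shrinks to $0$). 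A strictly concave function on a convex domain has at most one critical point, and if it has one it is the global maximum. Hence there is at most one equilibrium modulo translation.

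To finish I must exhibit one: by the $S_n$-symmetry of $U$ under cyclic rotation of the $x_i$ combined with translation (the map $x_i\mapsto x_{i+1}$, indices mod $n$, with $x_{n+1}:=x_1+1$, is a symmetry of $U$), the equidistant configuration $x_i=c+i/n$ is a fixed point of this symmetry, so its gradient must be invariant under the induced linear map on the translation-quotient; since that map is a cyclic permutation of the gap coordinates with no nonzero invariant vector on the constraint surface $\sum t_i=1$ other than the center, the gradient vanishes there. Alternatively, one checks directly that at $t_1=\dots=t_n=1/n$ each equation $\sum_{j\ne k}\wp'(x_k-x_j)=0$ holds because the arguments $x_k-x_j$ come in $\pm$ pairs modulo $1$ (for $n$ odd this is immediate; for $n$ even the term $x_k-x_j=1/2$ contributes $\wp'(1/2)=0$). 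Combining existence with the uniqueness from concavity gives the lemma.

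The main obstacle is making the convexity/concavity argument fully rigorous on the quotient: one must check that $U$ really is strictly concave as a function on $\Omega/\R\mathbf 1$ (not merely that each summand is concave in its own difference variable — one needs the sum of concave functions of linear forms to be concave, which is automatic, but strictness requires that the linear forms $x_i-x_j$ span the quotient, which they do), and that the boundary behaviour $U\to-\infty$ genuinely forces the supremum to be attained in the interior so that a critical point exists. Given strict concavity and properness, existence and uniqueness are both automatic, so in fact the explicit construction in the previous paragraph can be skipped — but I would keep it as a sanity check. I expect essentially no difficulty beyond carefully invoking the standard convexity of $\wp$ on $(0,1)$.
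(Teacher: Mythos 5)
Your argument is correct and is exactly the route the paper intends: the paper offers no detailed proof beyond the parenthetical remark that the lemma ``follows easily from the convexity of $\wp(z)$ for $0<z<1$'', and your write-up (equilibrium forces $p=0$ plus a critical point of the potential; strict convexity of $\sum_{i<j}\wp(x_j-x_i)$ on the translation quotient of the convex region plus properness at the boundary gives uniqueness; the equidistant configuration is critical by the $\pm$-pairing of $\wp'$) is precisely that argument made explicit. No gaps worth flagging.
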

Write $(x^*, p^*)$ for an equlibrium of $H$ of the form $x_i=\frac{i}{n}$, $p_i=0$.

\begin{cor}\label{coreq} If a function $F=F(x,p)$ satisfies $\{F, H\}=\{F, x_1+\dots + x_n\}=0$ then $(x^*,p^*)$ 
is an equilibrium for $F$. 
\end{cor}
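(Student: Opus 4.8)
The plan is to unwind the definitions: an \emph{equilibrium} of a Hamiltonian $G$ is a point where the Hamiltonian vector field $X_G=\{\,\cdot\,,G\}$ vanishes, equivalently where all partial derivatives of $G$ in $x$ and $p$ vanish. So I want to show that $dF$ vanishes at $(x^*,p^*)$. The key input is the previous lemma, which says that near $(x^*,p^*)$ — in the Weyl chamber $x_1<\dots<x_n<x_1+1$ with momenta zero — the set of equilibria of $H$ is exactly the one-parameter family $x_i=c+\tfrac{i}{n}$, $p_i=0$, $c\in\R$. That is, the critical set of $H$ (after restricting $p=0$, which is forced) is a smooth curve through $(x^*,p^*)$, tangent to the direction $\partial_c=\sum_i\partial_{x_i}$, i.e.\ the direction of the vector field $X_{x_1+\dots+x_n}$.

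First I would note that since $\{F,x_1+\dots+x_n\}=0$, the function $F$ is invariant under the Hamiltonian flow of $P:=x_1+\dots+x_n$; that flow is $\phi_t:(x,p)\mapsto(x,p-t(1,\dots,1))$, i.e.\ translation in momenta. Invariance of $F$ under $\phi_t$ forces $\partial F/\partial x_i$ summed appropriately... actually more directly: $\{F,P\}=\sum_i\partial F/\partial p_i=0$ identically, and differentiating this and the relation $\{F,H\}=0$ at the equilibrium is the mechanism. Let me instead argue as follows. Evaluate $\{F,H\}=0$ at the point $(x^*,p^*)$: since $dH$ vanishes there, the bracket $\{F,H\}=\sum_i(\partial_{p_i}F\,\partial_{x_i}H-\partial_{x_i}F\,\partial_{p_i}H)$ is automatically zero at $(x^*,p^*)$, so this gives no information at a single point. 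The real content must come from differentiating $\{F,H\}=0$, which holds \emph{identically}, along the equilibrium curve.

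So the core step: let $\gamma(c)$ denote the equilibrium curve $x_i(c)=c+i/n$, $p_i=0$. Along $\gamma$, we have $dH|_{\gamma(c)}=0$ for all $c$. Differentiate the identity $0=\{F,H\}=\sum_i(\partial_{p_i}F\,\partial_{x_i}H-\partial_{x_i}F\,\partial_{p_i}H)$ in the direction $\dot\gamma=\sum_i\partial_{x_i}$ at $c=0$. Using that $\partial_{x_i}H$ and $\partial_{p_i}H$ both vanish along $\gamma$, while their derivatives along $\dot\gamma$ give the Hessian of $H$, we get that $\sum_{i,j}(\partial_{p_i}F)(\mathrm{Hess}\,H)_{ij}\,\dot\gamma_j$-type terms must vanish — more precisely one obtains that $dF|_{(x^*,p^*)}$ pairs to zero with every vector in the image of the Hessian of $H$ at $(x^*,p^*)$. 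Combining this with the second hypothesis $\{F,P\}=\sum_i\partial_{p_i}F\equiv 0$ (which kills the normal direction the Hessian misses, namely the kernel direction $\dot\gamma$), one concludes $dF|_{(x^*,p^*)}=0$, i.e.\ $(x^*,p^*)$ is an equilibrium of $F$. The main obstacle is bookkeeping the rank/kernel of $\mathrm{Hess}\,H$ at the equilibrium: one must check that its kernel (within the relevant coordinates) is precisely spanned by the tangent $\dot\gamma$ to the equilibrium family, so that the Hessian's image together with $\dot\gamma$ span everything and $dF$ is forced to vanish. This is exactly the nondegeneracy statement underlying the lemma's classification of equilibria, so it should follow from (the proof of) that lemma, using strict convexity of $\wp$ on $(0,1)$.
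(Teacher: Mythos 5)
Your route is genuinely different from the paper's. The paper's own proof is a two-line global argument: since $\{F,H\}=0$, the Hamiltonian flow of $F$ preserves $H$ and hence maps equilibria of $H$ to equilibria of $H$, so the orbit of $(x^*,p^*)$ must stay on the curve $x_i=c+\tfrac{i}{n}$, $p_i=0$ given by the Lemma; since the flow also preserves $x_1+\dots+x_n$, which is strictly monotonic in $c$ along that curve, the orbit is constant, i.e.\ the vector field $X_F$ vanishes at $(x^*,p^*)$. This uses only the Lemma as stated. Your infinitesimal (Hessian) argument can be made to work, but as written it has two concrete gaps, and it needs strictly more input than the Lemma provides.

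Gap one: differentiating the identity $\{F,H\}=0$ \emph{only in the direction} $\dot\gamma$ gives no information. Since $dH\equiv 0$ along $\gamma$, the vector $\dot\gamma$ lies in the kernel of $\mathrm{Hess}\,H(x^*,p^*)$, so that particular directional derivative reduces to $0=0$. The correct step is to differentiate the identity at $(x^*,p^*)$ in \emph{all} $2n$ phase-space directions $v$: because $dH(x^*,p^*)=0$, the terms involving second derivatives of $F$ drop out and one gets $dF^{\,T}J\,\mathrm{Hess}\,H\,v=0$ for every $v$, i.e.\ the \emph{symplectic} gradient $X_F(x^*,p^*)$ (not $dF$ itself --- the twist by $J$ matters, otherwise the final step with $\sum_i\partial_{p_i}F=0$ would not close) is orthogonal to the image of the symmetric matrix $\mathrm{Hess}\,H$, hence lies in its kernel. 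If that kernel equals $\R\dot\gamma$, then $X_F=c\,\dot\gamma$, and $\{F,x_1+\dots+x_n\}=\sum_i\partial_{p_i}F=nc=0$ forces $c=0$, so $dF(x^*,p^*)=0$. Gap two: the needed transverse nondegeneracy, $\ker\mathrm{Hess}\,H(x^*,p^*)=\R\dot\gamma$, does \emph{not} follow from the Lemma's statement --- a classification of critical points is compatible with a degenerate Hessian --- so it must be proved separately. It is true here: the $p$-block of the Hessian is $2\,\mathrm{Id}$, while the $x$-block is $-2\kappa^2\sum_{i<j}\wp''(x_i-x_j)(e_i-e_j)(e_i-e_j)^T$ with $\wp''>0$ on $(0,1)$, and the kernel of $\sum_{i<j}(e_i-e_j)(e_i-e_j)^T$-type sums over the complete graph is exactly the all-ones direction. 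With these two repairs your argument is a valid alternative; the paper's flow argument avoids both the symplectic bookkeeping and the nondegeneracy verification, which is why it is shorter.
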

Indeed, the Hamiltonian flow defined by $F$ should map an equilibrium for $H$ to an equilibrium, at the same time preserving $x_1+\dots +x_n$. \qed 

\begin{remark}
Assuming that $F$ depends analytically on $\tau$, the above corollary remains valid for arbitrary, not necessarily real, $\tau$. This further implies that for any primitive period $\omega\in\Z+\Z\tau$, the point $(x, p)$ with $x_i=\frac{i}{n}\omega$, $p_i=0$ is an equilibrium for $F$.   
\end{remark}

\subsection{}
Recall CM Hamiltonians $\Hsf_q$, see \ref{hsf}. Let us restrict the choice of $q$ to 
$$\c[V^*]_0^{S_n}:=\c[V^*]^{S_n}/\langle e_1+\dots +e_n\rangle\,.$$
 In other words, we assume that $q$ is a symmetric polynomial of 
\begin{equation*}
\xi_i=e_i-\frac{1}{n}(e_1+\dots +e_n)\,,\qquad i=1,\dots, n\,. 
\end{equation*}
Let $y^0_i=y_{\xi_i}$ be the corresponding elliptic Dunkl operators. It is easy to check that $[y^0_i, x_1+\dots +x_n]=0$. Hence, the quantum CM Hamiltonians $\Hsf_q$ built from $y^0_i$ would commute with $x_1+\dots +x_n$, and their classical limit, $\Hsf_{q,c}$, would Poisson-commute with $x_1+\dots+x_n$, so we can apply the Corollary \ref{coreq}.
 \begin{prop}\label{jeq}
The classical CM Hamiltonians $\Hsf_{q,c}=\eta_0(\Hsf_q)$ with $q\in\c[V^*]^{S_n}_0$ have a joint equilibrium at $(x^*, p^*)$. Therefore, 
\begin{equation}
\{\Hsf_{q,c}, G\}=0\quad\text{at}\  (x,p)=(x^*, p^*)\qquad\forall\ G=G(x,p).
\end{equation}\qed
\end{prop}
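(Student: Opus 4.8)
\medskip
\noindent\emph{Sketch of a proof.} The plan is to verify that $F=\Hsf_{q,c}$ satisfies the two hypotheses of Corollary \ref{coreq}, with $H$ there being the classical elliptic CM Hamiltonian \eqref{ecmc}, which I denote $H_c$; the statement then follows formally. As a preliminary, note that $(x^*,p^*)$ lies in the domain of $\Hsf_{q,c}$: this element is a polynomial in $p_1,\dots,p_n$ with coefficients meromorphic in $x$ whose only poles are along $x_i-x_j\in\Z+\Z\tau$ (that is how the elliptic Dunkl operators, and hence $\lim_{\lambda\to0}L_{q,c}(\lambda,y)$, depend on $x$), whereas $x^*_i-x^*_j=(i-j)/n$ avoids that locus. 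Thus it is meaningful to ask whether $(x^*,p^*)$ is an equilibrium of $\Hsf_{q,c}$, i.e.\ a common zero of all the $\partial_{x_i}\Hsf_{q,c}$ and $\partial_{p_i}\Hsf_{q,c}$.

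Both Poisson-commutation properties will come from the same remark: if $a,b\in\Ah*S_n$ commute, then $\{\eta_0(a),\eta_0(b)\}=\eta_0(\hbar^{-1}[a,b])=0$, by the definition of the bracket on $A_0$ following \eqref{eta}. For $\{\Hsf_{q,c},H_c\}$: by the discussion in \ref{hsf}, $\Hsf_q$ commutes with all the (higher) quantum elliptic CM Hamiltonians $L_{q'}$ of Theorem \ref{efmv}, and in particular with $\He$ itself, which is $L_{q'}$ for $q'=e_1^2+\dots+e_n^2$ (see \ref{ex} and \eqref{ecal}); since $H_c=\eta_0(\He)$, passing to classical limits gives $\{\Hsf_{q,c},H_c\}=0$. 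For $\{\Hsf_{q,c},x_1+\dots+x_n\}$: because $q\in\c[V^*]^{S_n}_0$, throughout the construction of $\Hsf_q$ --- substitution of elliptic Dunkl operators into $L_{q,c}$, then $\lim_{\lambda\to0}$, then $\Res$ --- the momenta enter only through the combinations $y^0_i=y_{\xi_i}$, each satisfying $[y^0_i,x_1+\dots+x_n]=0$; since none of these three operations spoils this, $[\Hsf_q,x_1+\dots+x_n]=0$, and the classical limit yields $\{\Hsf_{q,c},x_1+\dots+x_n\}=0$.

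With both hypotheses in place, Corollary \ref{coreq} shows that $(x^*,p^*)$ is an equilibrium of $\Hsf_{q,c}$ for every $q\in\c[V^*]^{S_n}_0$, hence a joint equilibrium of the family. The final assertion is then immediate: for any $G=G(x,p)$, the value $\{\Hsf_{q,c},G\}(x^*,p^*)$ is a bilinear expression in the first-order partials of $\Hsf_{q,c}$ and of $G$ at that point, and those of $\Hsf_{q,c}$ all vanish at an equilibrium.

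I do not anticipate a real obstacle: the single non-formal ingredient is the commutation $[\Hsf_q,\He]=0$, which is precisely the property recorded for $\Hsf_q$ in \ref{hsf}; everything else is routine bookkeeping with the classical limit $\eta_0$, together with the elementary fact that a Poisson bracket vanishes at a critical point of one of its arguments. (The whole argument could equally be carried out at the quantum level and then specialised, but the classical version is all that is needed for the freezing construction below.)
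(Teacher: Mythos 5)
Your argument is correct and is essentially the paper's own: the paper proves this proposition exactly by noting that $\Hsf_q$ (hence $\Hsf_{q,c}$) commutes with the elliptic CM Hamiltonians by construction (\S\,\ref{hsf}), that for $q\in\c[V^*]^{S_n}_0$ the construction uses only the Dunkl combinations $y^0_i=y_{\xi_i}$ with $[y^0_i,x_1+\dots+x_n]=0$, and then applying Corollary \ref{coreq}, with the final assertion following since a Poisson bracket vanishes at a critical point of one argument. Your write-up just makes explicit the routine verifications (regularity at $(x^*,p^*)$, compatibility of $\lim_{\lambda\to 0}$ and $\Res$ with the commutation) that the paper leaves implicit.
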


\subsection{} Recall the commuting spin CM Hamiltonians $\Lh_f$, $\Hsfh_q$, see \ref{4.4}, \ref{4.5}. Let us view them as elements of $\Ah\otimes \c S_n$, so we can take classical limits using a map, similar to \eqref{eta}: 
$$\eta_0\,:\, \Ah\otimes \c S_n\to A_0\otimes \c S_n\,,\qquad f\mapsto f\,,\ \ \hat p_i\mapsto p_i\,,\ \ w\mapsto w\,.$$  
By applying $\widehat\Res$ to \eqref{exp}, we have
\begin{equation}\label{dec}
\Hsfh_q=\Hsf_q+\hbar\Asfh_q\qquad\text{with}\ \Asfh_q\in \Ah\otimes \c S_n\,.
\end{equation}
We define \emph{principal Inozemtsev Hamiltonians} by taking the classical limit of the spin part $\Asfh_q$ and evaluating it at equilibrium , i.e. 
\begin{equation}
\mathcal H_q:=\eta_0(\Asfh_q)|_{(x,p)=(x^*, p^*)}\,,\quad q\in\c[V^*]^{S_n}\,.
\end{equation}
For example, for $q=e_1^2+\dots+e_n^2$ one finds from \eqref{spcm}: 
\begin{align*}
\mathcal H_q&=2\kappa\sum_{i<j}^n\wp\left(\frac{i-j}{n}\right)(\widehat {s}_{ij}-\widehat\id)\,.
\end{align*}
The multiple of the identity can be removed so this is essentially the same as \eqref{ham}.
\begin{theorem} 
The elements $\mathcal H_q\in\c S_n$ pairwise commute.
\end{theorem}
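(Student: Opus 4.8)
The plan is to read off $[\mathcal H_q,\mathcal H_{q'}]=0$ from the commutativity of the principal spin CM Hamiltonians $\Hsfh_q$: expand $[\Hsfh_q,\Hsfh_{q'}]=0$ in powers of $\hbar$, pass to the classical limit, and then restrict to the equilibrium $(x^*,p^*)$. Throughout I shall take $q,q'\in\c[V^*]^{S_n}_0$, the range in which Proposition \ref{jeq} applies; the complementary directions contribute nothing to $\mathcal H_q$ at $(x^*,p^*)$, since they only produce Hamiltonians carrying a factor of the total momentum $p_1+\dots+p_n$, which vanishes there.

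First I would keep track of the $\hbar$-orders. As $\Ah$ is a formal deformation of the commutative algebra $A_0$, every commutator of elements of $\Ah$ lies in $\hbar\Ah$; since $\Hsf_q,\Hsf_{q'}$ carry no spin this gives $[\Hsf_q,\Hsf_{q'}]\in\hbar\Ah$, and likewise $[\Hsf_q,\Asfh_{q'}],[\Asfh_q,\Hsf_{q'}]\in\hbar(\Ah\otimes\c S_n)$. Substituting \eqref{dec} into $[\Hsfh_q,\Hsfh_{q'}]=0$, every term apart from $[\Hsf_q,\Hsf_{q'}]$ is manifestly in $\hbar^2(\Ah\otimes\c S_n)$, forcing $[\Hsf_q,\Hsf_{q'}]\in\hbar^2\Ah$ as well. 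Dividing the relation by $\hbar^2$ leaves an identity in $\Ah\otimes\c S_n$, and applying $\eta_0$ turns it into
\[
E_{q,q'}+\{\Hsf_{q,c},\Asfh_{q',c}\}-\{\Hsf_{q',c},\Asfh_{q,c}\}+[\Asfh_{q,c},\Asfh_{q',c}]=0
\]
in $A_0\otimes\c S_n$, where $\Asfh_{q,c}:=\eta_0(\Asfh_q)$, the symbol $\{\,\cdot\,,\,\cdot\,\}$ denotes the Poisson bracket acting on the $A_0$-factor with the $\c S_n$-factor carried along as a spectator, and $E_{q,q'}:=\eta_0\big(\hbar^{-2}[\Hsf_q,\Hsf_{q'}]\big)$ is scalar in the $\c S_n$-variable, i.e.\ a multiple of $\widehat\id$.

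Next I would evaluate this identity at $(x,p)=(x^*,p^*)$. Since $x^*$ lies off the singular locus $\bigcup_{i<j}\{x_i-x_j\in\Z+\Z\tau\}$, all coefficients involved are regular there, so evaluation is a well-defined algebra homomorphism $A_0\otimes\c S_n\to\c S_n$, and by the definition of $\mathcal H_q$ the last term becomes $[\mathcal H_q,\mathcal H_{q'}]$. Writing $\Asfh_{q',c}=\sum_w a^{(q')}_w\otimes\widehat w$ with $a^{(q')}_w\in A_0$, we have $\{\Hsf_{q,c},\Asfh_{q',c}\}=\sum_w\{\Hsf_{q,c},a^{(q')}_w\}\otimes\widehat w$, and Proposition \ref{jeq}, applied with $G=a^{(q')}_w$ for each $w$, makes every $\{\Hsf_{q,c},a^{(q')}_w\}$ vanish at $(x^*,p^*)$; the same holds for $\{\Hsf_{q',c},\Asfh_{q,c}\}$. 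Hence at the equilibrium the identity collapses to $[\mathcal H_q,\mathcal H_{q'}]=-\,E_{q,q'}(x^*,p^*)\,\widehat\id$.

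The proof then concludes with a trace argument: $[\mathcal H_q,\mathcal H_{q'}]$ lies in the commutator subspace $[\c S_n,\c S_n]$, on which the linear functional ``coefficient of $\widehat\id$'' vanishes (it is a trace, since the $\widehat\id$-coefficient of $\widehat u\widehat v$ equals $[uv=\id]=[vu=\id]$), while this functional takes the value $1$ on $\widehat\id$. Comparing the two sides of $[\mathcal H_q,\mathcal H_{q'}]=-E_{q,q'}(x^*,p^*)\,\widehat\id$ therefore forces both the scalar $E_{q,q'}(x^*,p^*)$ and the commutator to vanish, as claimed. The one genuinely delicate step is the $\hbar$-bookkeeping above: one must organise the expansion so that, after dividing twice by $\hbar$ and passing to the classical limit, the surviving spin part is precisely $[\Asfh_{q,c},\Asfh_{q',c}]$ and the only residue is the harmless scalar $E_{q,q'}$. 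Once that is in place, Proposition \ref{jeq} and the elementary trace argument finish the proof with no further computation.
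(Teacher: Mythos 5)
Your argument is correct and follows the paper's own strategy: reduce to $q,q'\in\c[V^*]^{S_n}_0$, expand $[\Hsfh_q,\Hsfh_{q'}]=0$ via the decomposition \eqref{dec}, divide by $\hbar^2$, apply $\eta_0$, and kill the Poisson-bracket terms at $(x^*,p^*)$ using Proposition \ref{jeq}; your $\hbar$-bookkeeping is exactly the content of the paper's Lemma \ref{lemma1}. The one genuine difference is how the scalar term is handled: the paper invokes the hypothesis $[B_0,C_0]=0$ of Lemma \ref{lemma1}, which holds because the principal Hamiltonians $\Hsf_q$ already commute exactly as quantum operators (they belong to the commuting CM family, cf.\ \S\,\ref{hsf}), so the term you call $E_{q,q'}$ is identically zero; you instead keep $E_{q,q'}$ as an unknown multiple of $\widehat\id$ and eliminate it at the end by the trace (coefficient-of-identity) argument on $\c S_n$. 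Your variant is slightly more self-contained, since it does not rely on the exact quantum commutativity of the $\Hsf_q$, at the modest cost of the extra trace step; the paper's route is shorter because that commutativity is already available. Both the paper's one-line reduction to $\c[V^*]^{S_n}_0$ and yours are equally terse, so no complaint there.
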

\begin{proof}
First, note that for $q=e_1+\dots+e_n$, $\mathcal H_q=0$. Hence, it is sufficient to prove commutativity for
$q\in\c[V^*]^{S_n}_0$. We need the following elementary lemma.
\begin{lemma}\label{lemma1}
Let $B=B_0+\hbar B_1$, $C=C_0+\hbar C_1$ be elements of $\Ah\otimes \c S_n$ with $B_0, C_0\in \Ah\subset \Ah\otimes \c S_n$. Denote by $b_0,  b_1, c_0, c_1$ the classical limits of $B_0, B_1, C_0, C_1$. Then, assuming $[B_0, C_0]=0$, one has $[B,C]\in\hbar^2\Ah\otimes \c S_n$, and
\begin{equation*}
\eta_0(\hbar^{-2}[B,C])=\{b_0, c_1\}-\{c_0, b_1\}+[b_1, c_1]\,.
\end{equation*}
Here the Poisson bracket in the r.h.s.~is understood as $\{a, b\otimes\widehat w\}:=\{a,b\}\otimes\widehat w$ for $a,b\in A_0$. 
\end{lemma}
Now, given $q,q'\in\c[V^*]^{S_n}_0$, we apply the lemma to $B=\Hsfh_q$, $C=\Hsfh_{q'}$ and use the decomposition \eqref{dec}. Since $[B,C]=0$ in this case, we get $0=\{b_0, c_1\}-\{c_0, b_1\}+[b_1, c_1]$. The first two terms in the r.h.s.~vanish at $(x,p)=(x^*, p^*)$, by Prop. \ref{jeq}. As a result, $[b_1, c_1]$ vanishes at $(x^*, p^*)$, i.e. $[\mathcal H_q, \mathcal H_{q'}]=0$.\footnote{When finalising this paper, we learned about the work \cite{LRS} where a similar idea was used for the Haldane--Shastry chain.}
\end{proof}

\subsection{}
We proceed to define \emph{higher Inozemtsev Hamiltonians} by
\begin{equation}
\mathcal H_f:=\eta_0(\Lh_f)|_{(x,p)=(x^*, p^*)}\,,\quad f\in \BB_{0, \kappa}\,.
\end{equation}
\begin{theorem} 
The elements $\mathcal H_f$ pairwise commute, as well as commuting with the principal Hamiltonians $\mathcal H_q$.
\end{theorem}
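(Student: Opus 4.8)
The plan is to mimic the proof of the previous theorem (commutativity of the principal Hamiltonians $\mathcal H_q$), adapting Lemma \ref{lemma1} to handle the mixed commutators. Recall that by Proposition \ref{prop19}(2), each $L_{q,c}(\lambda,y_c)$ lies in $A_0\subset A_0*S_n$, and by Corollary \ref{qcl} we have $\lim_{\lambda\to 0}L_{q,c}(\lambda,y)=\Hsf_q+\hbar\Asf_q$ with $\Hsf_q\in A_\hbar$ a scalar (non-spin) differential operator. The first observation I would make is that an analogous ``classical scalarity'' must hold for the higher Hamiltonians: for $f\in\BB_{0,\kappa}$, the classical limit $\eta_0(\lim_{\lambda\to 0}f(\lambda,y))$ equals $\lim_{\lambda\to 0}f(\lambda,y_c)$, and this lies in $A_0$ — this is exactly the content referenced after Theorem \ref{sphham} (the proof of \cite[Theorem 3.1]{EFMV} applied to general $f\in\BB_{0,\kappa}$, together with the elliptic analogue of Lemma \ref{nores}, i.e. Proposition \ref{prop19}(2) applied to the spherical subalgebra). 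So after applying $\widehat\Res$, I would write $\Lh_f=\mathrm L_f+\hbar\, \mathrm M_f$ where the classical limit of $\mathrm L_f$ is a scalar (lies in $A_0\subset A_0\otimes\c S_n$) and $\mathrm M_f\in\Ah\otimes\c S_n$. Then $\mathcal H_f=\eta_0(\mathrm M_f)|_{(x^*,p^*)}$ up to the scalar part, which is central in $\c S_n$ and hence irrelevant for commutators.

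Next I would invoke Lemma \ref{lemma1} essentially verbatim. For two higher Hamiltonians $\Lh_f,\Lh_{f'}$ with $f,f'\in\BB_{0,\kappa}$: by Theorem \ref{sphham} they commute, and by the above their ``leading'' parts $\mathrm L_f,\mathrm L_{f'}$ have classical limits in $A_0$ that Poisson-commute (being limits of commuting families of differential operators, or directly because $\BB_{0,\kappa}$ is Poisson-commutative on the relevant piece). Actually the cleanest route is: $\mathrm L_f$ and $\mathrm L_{f'}$ need not commute on the nose, but their classical limits $\ell_f,\ell_{f'}\in A_0$ Poisson-commute because they are (by construction, via $\widehat\Res$ and Proposition \ref{prop19}) the classical higher CM Hamiltonians $L_{f,c}$ which form a Poisson-commutative family. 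Then Lemma \ref{lemma1}, applied with $B=\Lh_f$, $C=\Lh_{f'}$, $B_0=\mathrm L_f$, $C_0=\mathrm L_{f'}$ — here I need $[B_0,C_0]=0$ in $\Ah\otimes\c S_n$, which may fail at the $O(\hbar)$ level, so I would instead apply the lemma with $B_0,C_0$ chosen as the genuine scalar parts and absorb the discrepancy. Granting that, $0=[B,C]$ forces $\{\ell_f,m_{f'}\}-\{\ell_{f'},m_f\}+[m_f,m_{f'}]=0$ where $m_f=\eta_0(\mathrm M_f)$. Evaluating at $(x^*,p^*)$: the first two Poisson brackets vanish by Corollary \ref{coreq}, provided $\ell_f$ Poisson-commutes with $x_1+\dots+x_n$ — which holds since $\BB_{0,\kappa}$ built from the $y^0_i$ commutes with $x_1+\dots+x_n$, so one restricts to that sub-family, exactly as in \ref{4.4}. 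Hence $[m_f,m_{f'}]|_{(x^*,p^*)}=0$, i.e. $[\mathcal H_f,\mathcal H_{f'}]=0$.

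For the mixed case $[\mathcal H_f,\mathcal H_q]=0$ with $f\in\BB_{0,\kappa}$, $q\in\c[V^*]^{S_n}_0$, the argument is identical: by the footnote in \ref{4.5} (or directly from the constructions) $\Hsfh_q$ and $\Lh_f$ commute in $\D(V)\otimes\c S_n$, both admit decompositions with scalar classical leading part, and Lemma \ref{lemma1} plus Proposition \ref{jeq} and Corollary \ref{coreq} give the conclusion. One subtlety: Proposition \ref{jeq} is stated for $\Hsf_{q,c}$; I would need the analogous statement that $\ell_f=L_{f,c}$ (for $f$ in the subalgebra built from $y^0_i$) also has an equilibrium at $(x^*,p^*)$, which follows from Corollary \ref{coreq} once one checks $\{L_{f,c}, H\}=0$ — true because $L_{f,c}$ lies in the Poisson-commutative family containing $H_c=\Hr_c(\lambda,y_c)|_{\lambda=0}$ — and $\{L_{f,c},x_1+\dots+x_n\}=0$.

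The main obstacle is the bookkeeping around Lemma \ref{lemma1}'s hypothesis $[B_0,C_0]=0$: strictly, the natural ``order-zero in $\hbar$'' pieces of $\Lh_f$ need not commute in $\Ah\otimes\c S_n$, only their $\hbar\to 0$ limits Poisson-commute. I would resolve this either by re-deriving the $O(\hbar^2)$ identity directly from $[B,C]=0$ without assuming $[B_0,C_0]=0$ exactly (expanding $[B,C]$ in $\hbar$ and using that the $O(1)$ term, which is $[b_0,c_0]$, vanishes because $b_0,c_0\in A_0$ is commutative — so the real hypothesis needed is just that $B_0,C_0$ have commuting classical limits, which is automatic here), or by replacing $B_0$ with its actual scalar summand. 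The cleanest fix is to note that Lemma \ref{lemma1} as used in the previous proof already only really needs $b_0,c_0$ to Poisson-commute (since $A_0$ is commutative the hypothesis ``$[B_0,C_0]=0$'' in that proof is automatically about the deformation, not the leading term), so the same lemma applies here with no change. Everything else is a routine transcription of the $\mathcal H_q$ proof.
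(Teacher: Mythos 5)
There is a genuine gap, and it starts with the definition: the paper's $\mathcal H_f$ is $\eta_0(\Lh_f)|_{(x^*,p^*)}$, i.e.\ the \emph{full classical limit} of $\Lh_f$ evaluated at equilibrium, not the coefficient of $\hbar$ in a decomposition $\Lh_f=\mathrm L_f+\hbar\,\mathrm M_f$. Your proposed ``classical scalarity'' for higher Hamiltonians --- that $\eta_0(\lim_{\lambda\to 0}f(\lambda,y))$ lies in $A_0$ for every $f\in \BB_{0,\kappa}$ --- is false: Proposition \ref{prop19}(2) is special to the elliptic Hamiltonians $L_{q,c}$ and does not extend to all of $\BB_{0,\kappa}$. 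The paper's own example in \S\,6.2 is a counterexample: for $f=\sum_i x_ip_i\in\BB_{0,\kappa}$ one gets $\lim_{\lambda\to 0}f(\lambda,y)=-\sum_{i<j}s_{ij}$, which has spin content already at order $\hbar^0$; likewise the cubic example has $\zeta$-terms times $(ijk)-(kji)$ at the classical level, and it is precisely this classical spin content that survives evaluation at $(x^*,p^*)$ and makes $\mathcal H_f$ nontrivial. So the decomposition underlying your Lemma \ref{lemma1}-style argument for $[\mathcal H_f,\mathcal H_{f'}]$ is unavailable, and the object you end up commuting ($\eta_0(\mathrm M_f)$ at equilibrium) is not the paper's $\mathcal H_f$. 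In fact no such argument is needed: since $[\Lh_f,\Lh_{f'}]=0$ exactly (Theorem \ref{sphham}), applying $\eta_0$ gives $[\eta_0(\Lh_f),\eta_0(\Lh_{f'})]=0$ in $A_0\otimes\c S_n$, and because $A_0$ is commutative, evaluating the coefficients at the point $(x^*,p^*)$ kills the commutator; as the paper notes, the equilibrium property plays no role here.

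For the mixed case your instinct is closer to the paper, but again you overcomplicate it by trying to give $\Lh_f$ a scalar leading part and by worrying whether $L_{f,c}$ has an equilibrium at $(x^*,p^*)$. The paper decomposes only the principal Hamiltonian, $\Hsfh_q=\Hsf_q+\hbar\Asfh_q$ as in \eqref{dec}, and uses the $C_0=0$ specialization of Lemma \ref{lemma1} with $B=\Hsfh_q$ and $C=\Lh_f$ arbitrary: $[B,C]=0$ gives $\{b_0,c\}+[b_1,c]=0$ with $b_0=\Hsf_{q,c}$, $b_1=\eta_0(\Asfh_q)$, $c=\eta_0(\Lh_f)$. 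At $(x^*,p^*)$ the bracket $\{b_0,c\}$ vanishes because $b_0$ has an equilibrium there (Proposition \ref{jeq}), for \emph{any} $c$; hence $[\mathcal H_q,\mathcal H_f]=0$ with no structural assumption on $\Lh_f$ whatsoever. Your side-discussion about weakening the hypothesis $[B_0,C_0]=0$ in Lemma \ref{lemma1} is moot for this theorem (and for the previous one the hypothesis holds on the nose, since the $\Hsf_q$ genuinely commute).
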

The fact that $[\mathcal H_f, \mathcal H_{f'}]=0$ for $f, f'\in \BB_{0,\kappa}$ is immediate from $[\Lh_f, \Lh_{f'}]=0$ (evaluation at $(x^*, p^*)$ plays no role here).
The fact that $[\mathcal H_q, \mathcal H_{f}]=0$ follows from the following lemma (obtained by setting $C_0=0$ in Lemma \ref{lemma1}). 
\begin{lemma}
Let $B=B_0+\hbar B_1$, $C$ be elements of $\Ah\otimes\c S_n$ with $B_0\in \Ah\subset \Ah\otimes \c S_n$. Denote by $b_0,  b_1, c$ the classical limits of $B_0, B_1, C$. Then one has $[B,C]\in\hbar\Ah\otimes \c S_n$, and
\begin{equation*}
\eta_0(\hbar^{-1}[B,C])=\{b_0, c\}+[b_1, c]\,.
\end{equation*}
\end{lemma}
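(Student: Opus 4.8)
\emph{Proof strategy.} This is the $C_0=0$ case of Lemma \ref{lemma1} (take $C_1=C$ there, so that its conclusion about $[B,\hbar C]$ becomes the asserted statement about $[B,C]$), so one option is simply to deduce it from that lemma. I would instead give the short direct argument, since the hypothesis $B_0\in\Ah$ makes it almost immediate. First I would split $[B,C]=[B_0,C]+\hbar\,[B_1,C]$. Since $B_0$ lies in the subalgebra $\Ah\otimes 1$, it commutes with $1\otimes\c S_n$; writing $C=\sum_{w\in S_n}C_w\otimes\widehat w$ with $C_w\in\Ah$, this gives $[B_0,C]=\sum_{w}[B_0,C_w]\otimes\widehat w$. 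Each $[B_0,C_w]$ is a commutator inside $\Ah$, and since $A_0=\c(V)[p_1,\dots,p_n]$ is commutative the algebra $\Ah$ is a flat deformation of it with $[\Ah,\Ah]\subseteq\hbar\Ah$; hence $[B_0,C]\in\hbar\,\Ah\otimes\c S_n$. As $\hbar[B_1,C]$ manifestly lies in $\hbar\,\Ah\otimes\c S_n$ too, this yields $[B,C]\in\hbar\,\Ah\otimes\c S_n$.

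For the identity I would apply $\eta_0\bigl(\hbar^{-1}(\,\cdot\,)\bigr)$ to each term. For the first term, $\eta_0\bigl(\hbar^{-1}[B_0,C]\bigr)=\sum_{w}\eta_0\bigl(\hbar^{-1}[B_0,C_w]\bigr)\otimes\widehat w$, and by the definition of the Poisson bracket on $A_0$ (see \ref{clcase}) each summand equals $\{\eta_0(B_0),\eta_0(C_w)\}\otimes\widehat w=\{b_0,c_w\}\otimes\widehat w$, where $c=\sum_{w}c_w\otimes\widehat w$; by the convention $\{a,b\otimes\widehat w\}:=\{a,b\}\otimes\widehat w$ this sum is precisely $\{b_0,c\}$. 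For the second term, $\eta_0\bigl(\hbar^{-1}\cdot\hbar[B_1,C]\bigr)=\eta_0([B_1,C])=[\eta_0(B_1),\eta_0(C)]=[b_1,c]$, since the classical-limit map $\eta_0\colon\Ah\otimes\c S_n\to A_0\otimes\c S_n$ is an algebra homomorphism with kernel $\hbar(\Ah\otimes\c S_n)$. Summing the two contributions gives $\eta_0(\hbar^{-1}[B,C])=\{b_0,c\}+[b_1,c]$.

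I do not expect any genuine difficulty. The one point that should not be glossed over is that it is the strong hypothesis $B_0\in\Ah$ — not merely $B_0\in\Ah\otimes\c S_n$ — that forces the extra factor of $\hbar$ in $[B_0,C]$: the ambient algebra $\Ah\otimes\c S_n$ is \emph{not} commutative modulo $\hbar$, its classical limit $A_0\otimes\c S_n$ being genuinely noncommutative through the group-algebra factor. Keeping the two meanings of the bracket straight, via the stated convention, is the other bit of bookkeeping to watch.
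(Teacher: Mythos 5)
Your argument is correct. Note, though, that the paper's own proof is just the one-line reduction you mention at the start: it obtains this statement by setting $C_0=0$ in Lemma \ref{lemma1} (so that the hypothesis $[B_0,C_0]=0$ is vacuous and, with $C=\hbar C_1$, one factor of $\hbar$ in the conclusion $[B,\hbar C_1]\in\hbar^2\Ah\otimes\c S_n$ is absorbed, giving exactly the displayed identity). Your direct computation instead reproves the relevant part of that lemma from scratch: splitting $[B,C]=[B_0,C]+\hbar[B_1,C]$, using that $B_0\in\Ah\otimes 1$ commutes with $1\otimes\c S_n$ so that $[B_0,C]=\sum_w[B_0,C_w]\otimes\widehat w$ lands in $\hbar\,\Ah\otimes\c S_n$ (because $\Ah/\hbar\Ah\cong A_0$ is commutative and $\Ah$ is $\hbar$-torsion free), and then applying $\eta_0(\hbar^{-1}\,\cdot\,)$ termwise via the defining property $\{\eta_0(a),\eta_0(b)\}=\eta_0(\hbar^{-1}[a,b])$ and the homomorphism property of $\eta_0$. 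Both routes are elementary; the paper's is shorter given that Lemma \ref{lemma1} is already stated, while yours is self-contained and makes explicit where the hypothesis $B_0\in\Ah$ (rather than merely $B_0\in\Ah\otimes\c S_n$) is used — a point the paper leaves implicit. Your closing remark that $A_0\otimes\c S_n$ is not commutative, so the group-algebra factor would otherwise obstruct the extra power of $\hbar$, is exactly the right thing to flag.
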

\qed

\begin{remark}
The above results and constructions remain valid for elliptic Calogero--Moser systems associated for any root system (including the $BC_n$ version) - all the necessary ingredients can already be found in the cited sources. In general, equilibria for these systems will not have such a simple form as above, which makes the corresponding spin chains less natural. In some cases, however, simple equilibria configurations are possible, see, e.g., \cite{IS, BPS}.    
\end{remark}

\section{Explicit examples}

Here we calculate two of the Hamiltonians explicitly, as an illustration and to compare with known results. 

\subsection{} We use the following shorthand notation: $\phi_{ij}^{ab}=\phi(\lambda_a-\lambda_b, x_i-x_j)$, $\phi_{ij}:=\phi_{ij}^{ij}$, $h_{ij}^{ab}=-\phi'(\lambda_a-\lambda_b, x_i-x_j)$, $h_{ij}:=h_{ij}^{ij}$, $u_{ij}=\wp(x_i-x_j)$, $u^{ab}=\wp(\lambda_a-\lambda_b)$, $\theta_i=\sum_{j\ne i} \phi_{ij}s_{ij}$. With these notations, the Dunkl operators are
$
y_i=\pp_i-\kappa\theta_i
$. Let us substitute them into the cubic classical Hamiltonian $L=L_q$ with $q=\sum_{i<j<k}e_ie_je_k$:
\begin{equation*}\label{L}
L=\sum_{i<j<k}\left(p_ip_jp_k+\kappa^2u^{ij}p_k+\kappa^2u^{ik}p_j+\kappa^2u^{jk}p_i\right)\,.
\end{equation*}
Thus, we need to evaluate 
\begin{equation}\label{sum1}
L(\lambda, y)=\frac{1}{6}\sum_{i\ne j\ne k}y_iy_jy_k+\kappa^2\sum_{i\ne j\ne k} u^{ij}y_k\,.
\end{equation}
First, consider 
$$
y_iy_jy_k=\pp_i\pp_j\pp_k-\kappa(\pp_i\pp_j\theta_k+\pp_i\theta_j\pp_k+\theta_i\pp_j\pp_k)+\kappa^2(\theta_i\theta_j\pp_k+\theta_i\pp_j\theta_k+\pp_i\theta_j\theta_k)-\kappa^3\theta_i\theta_j\theta_k\,.
$$
When expanding each $\theta_a$ in terms of $\phi_{ab}s_{ab}$ and taking the sum of these over all $i\ne j\ne k$, some terms cancel due to $\phi_{ia}=-\phi_{ai}$. For example, the term $\pp_i\pp_j\phi_{ka}s_{ka}$ (with $a\ne i,j,k$) coming from $y_iy_jy_k$ cancels with a similar term  $\pp_i\pp_j\phi_{ak}s_{ak}$ coming from $y_iy_jy_a$. In addition, some terms linear in momenta cancel due to Fay's identity,
$$
\phi_{ij}^{ik}\phi_{jk}=\phi_{ik}^{jk}\phi_{ij}+\phi_{jk}^{ji}\phi_{ik}\qquad (i\ne j\ne k)\,,
$$  
which implies that for $a\ne i,j$
$$
\phi_{ia}s_{ia}\phi_{ja}s_{ja} +\ \text{(cyclic permutations of $i, j, a$)}\ =0\,.
$$
Similarly, when expanding $\theta_i\theta_j\theta_k$, we may neglect terms with indices outside $\{i,j,k\}$ because of the previous identity and its higher analogue, valid for $a\ne i,j,k$:
$$
\phi_{ia}s_{ia}\phi_{ja}s_{ja}\phi_{ka}s_{ka}+\ \text{(cyclic permutations of $i, j, k, a$)}\ =0\,.
$$    
As a result, when calculating $\sum_{i\ne j\ne k}y_iy_jy_k$, it is sufficient to keep in each $y_iy_jy_k$ only the terms with indices from $\{i,j,k\}$ set (i.e.~calculating $y_iy_jy_k$ \emph{as if} it was just for three particles at $x_i, x_j, x_k$.)
By the same token, in the second sum in \eqref{sum1}, we have a cancellation of, e.g., $u^{ij}\phi_{ka}s_{ka}$ with $u^{ij}\phi_{ak}s_{ak}$ (if $a\ne i,j,k$). As a result, we may replace 
$$
u^{ij}y_k \mapsto u^{ij}(\pp_k-\kappa\phi_{ki}s_{ki}-\kappa\phi_{kj}s_{kj})\,.
$$ 
Altogether, \eqref{sum1} can be replaced with $\sum_{i<j<k}H_{ijk}$ where
\begin{multline*}\label{sum2}
H_{ijk}=(\pp_i-\kappa\phi_{ij}s_{ij}-\kappa\phi_{ik}s_{ik})(\pp_j-\kappa\phi_{ji}s_{ji}-\kappa\phi_{jk}s_{jk})(\pp_k-\kappa\phi_{ki}s_{ki}-\kappa\phi_{kj}s_{kj})
\\+\kappa^2\left\{u^{ij}(\pp_k-\kappa\phi_{ki}s_{ki}-\kappa\phi_{kj}s_{kj})+\ \text{(cyclic permutations of $i, j, k$)}\right\}.
\end{multline*}
With the help of Fay's identity and $\phi_{ij}^{ij}\phi_{ij}^{ji}=u_{ij}-u^{ij}$, this becomes 
\begin{align*}
H_{ijk}=& \pp_i\pp_j\pp_k+\kappa^2u_{ij}\pp_k+\kappa^2u_{ik}\pp_j+\kappa^2u_{jk}\pp_i
\\&-\hbar \kappa (h_{ij}\pp_ks_{ij}+h_{jk}\pp_is_{jk}+h_{ki}\pp_js_{ki})
\\&+\hbar\kappa^2\left\{(h_{ij}\phi_{ki}^{kj}+ h_{ij}^{ik}\phi_{jk})+\ \text{(cyclic permutations of $i, j, k$)}\right\}(ijk)
\\&+\hbar\kappa^2\left\{(h_{ji}\phi_{kj}^{ki}+h_{ji}^{jk}\phi_{ik})+\ \text{(cyclic permutations of $i, j, k$)}\right\}(kji)\,.
\end{align*} 
Taking $\lambda\to 0$ gives 
\begin{align*}
H_{ijk}=& \pp_i\pp_j\pp_k+\kappa^2u_{ij}\pp_k+\kappa^2u_{ik}\pp_j+\kappa^2u_{jk}\pp_i
\\&-\hbar \kappa (u_{ij}\pp_ks_{ij}+ u_{jk}\pp_is_{jk}+ u_{ki}\pp_js_{ki})
\\&+\hbar\kappa^2\left\{(u_{ij}+u_{jk}+u_{ki})(\zeta_{ij}+\zeta_{jk}+\zeta_{ki})+ \frac12 (u'_{ij}+u'_{jk}+u'_{ki})\right\}{(ijk)}
\\&+\hbar\kappa^2\left\{(u_{kj}+u_{ji}+u_{ik})(\zeta_{kj}+\zeta_{ji}+\zeta_{ik})+ \frac12 (u'_{kj}+u'_{ji}+u'_{ik})\right\}{(kji)}\,.
\end{align*}
Applying $\widehat\Res$ gives a spin Hamiltonian that matches the expression for $I_2$ in \cite{DI}. Furthermore, picking terms of order $\hbar$ and evaluating at an equilibrium in the classical limit, we get a principal Inozemtsev Hamiltonian
\begin{align*}
\mathcal H_3=&\sum_{i<j<k} \left\{(\wp_{ij}+\wp_{jk}+\wp_{ki})(\zeta_{ij}+\zeta_{jk}+\zeta_{ki})+ \frac12 (\wp'_{ij}+\wp'_{jk}+\wp'_{ki})\right\}(\widehat{(ijk)}-\widehat{(kji)})\,.
\end{align*}
Here $\wp_{ij}=\wp(\frac{i-j}{n})$, $\zeta_{ij}=\zeta(\frac{i-j}{n})$, $\wp'_{ij}=\wp'(\frac{i-j}{n})$. Note that by the addition formula for $\wp(z)$, $\wp_{ij}+\wp_{jk}+\wp_{ki}=(\zeta_{ij}+\zeta_{jk}+\zeta_{ki})^2$, so $\mathcal H_3$ matches $\hat{I}_2$ from \cite{I1}.

\subsection{} Let us also consider higher Hamiltonians obtained from $f\in \BB_{0,\kappa}$. 
First, let us try 
$$
f=\Res \sum_{i=1}^n x_i\yr_{i,c}=\sum_{i=1}^n x_ip_i\,.
$$ 
Replacing the classical momenta with elliptic Dunkl operators, we get
\begin{equation*}
f(\lambda, y)=\sum_{i=1}^n \lambda_iy_i=\sum_{i=1}^n \lambda_i(\pp_i+\kappa\sum_{j\ne i}\phi_{ij}s_{ij})\,.
\end{equation*}
We have 
\begin{equation}\label{as}
\phi_{ij}=-\frac{1}{\lambda_{i}-\lambda_j}+\zeta(x_i-x_j)+o(1)
\quad\text{near}\ \lambda=0\,.
\end{equation}
Using this in the previous formula, we find that
$$
\lim_{\lambda\to 0} f(\lambda, y)=-\sum_{i<j}s_{ij}\,,
$$
which commutes with other spin Hamiltonians for trivial reasons. A more interesting choice is 
$$
f=\sum_{i\ne j\ne k}x_i\left(p_jp_k+\frac{\kappa^2}{x_{jk}^2}\right)\,.
$$ 
Such $f$ can be obtained by applying $\Res$ to $\sum_{i\ne j\ne k} x_i\yr_{j,c}\yr_{k,c}$ (and subtracting a multiple of $p_1+\dots +p_n$).
We then need to consider 
$$
f(\lambda, y)=\sum_{i\ne j\ne k}\lambda_i\left(y_jy_k+\frac{\kappa^2}{\lambda_{jk}^2}\right)\,.
$$ 
Since we are interested in the $\lambda\to 0$ limit, we may expand $y_jy_k$ using \eqref{as} and keep track only of the terms singular in $\lambda$.  After a straightforward rearrangement, the limiting form of $f(\lambda, y)$ is found as
$$
-\kappa\sum_{i\ne j\ne k} \pp_is_{jk}-\kappa^2\sum_{i<j<k}\left\{\zeta(x_i-x_j)+\zeta(x_j-x_k)+\zeta(x_k-x_i)\right\}((ijk)-(kji))\,.
$$
Applying $\widehat\Res$ gives an additional spin CM Hamiltonian 
\begin{equation*}
\Hsfh_f=-\kappa\sum_{i\ne j\ne k} \pp_i\widehat{s_{jk}}+\kappa^2\sum_{i<j<k}\left\{\zeta(x_i-x_j)+\zeta(x_j-x_k)+\zeta(x_k-x_i)\right\}(\widehat{(ijk)}-\widehat{(kji)})\,.
\end{equation*}
This matches the quantum spin Hamiltonian $I_1$ found in \cite{DI}. 
Taking classical limit and evaluating at equilibrium, we obtain (setting $\kappa=1$)
$$
\mathcal H_f=\sum_{i<j<k}\left\{\zeta_{ij}+\zeta_{jk}+\zeta_{ki}\right\}(\widehat{(ijk)}-\widehat{(kji)})\,,\qquad\zeta_{ij}=\zeta\left(\frac{i-j}{n}\right)\,.
$$ 
This coincides with $\hat{I}_1$ from \cite{I1}. Note that it is surprisingly difficult to check that $[\hat{I}_1, \hat{I}_2]=0$ by a direct calculation, see \cite{DI08}. 

\begin{remark}
For calculating $\mathcal H_f$ with $f(x,p)\in B_{0, \kappa}$, we may take classical limit and evaluate at equilibrium immediately, i.e. substitute $x_i\mapsto \lambda$, $p_i\mapsto \theta_i=\sum_{j\ne i}\phi_{ij}s_{ij}$. However, finding the limiting value at $\lambda\to 0$ seems a lengthy process in general. 
\end{remark}

\section{A generalisation}

\subsection{}
Consider the following modification of the Hamiltonian \eqref{ham}:
\begin{equation}\label{hamex}
\H^\vee=\sum_{i<j}^n\phi'(\lambda_i-\lambda_j, \frac{i-j}{n})P_{ij}\,.
\end{equation}
Here $\phi'(\mu, z)=\frac{d}{dz}\phi(\mu,z)$ is considered as a formal series in $\mu$,
\begin{equation*}
\phi'(\mu,z)=\sum_{i=0}^\infty c_i\mu^i\,,\qquad c_0=-\wp(z)\,,\ c_1=\wp'(z)+2\zeta(z)\wp(z)\,,\dots\,,
\end{equation*} 
making $\phi'(\lambda_i-\lambda_j, \frac{i-j}{n})$ a formal series in $\lambda_i, \lambda_j$.
The operator \eqref{hamex} is viewed as acting on the tensor product $U^{\otimes n}$ of $U=\c^m\otimes \c[[s]]$, with $P_{ij}$ permuting the factors, and with the $i$th spectral variable $\lambda_i$ acting as $s^{(i)}$, i.e. as multiplication by $s$ on the $i$th factor and identity on the others. Note that the action of $\H$ naturally descends onto $(U/s^{l+1}U)^{\otimes n}$, for any $l\ge 0$. In particular, taking $l=0$ recovers the original Hamiltonian \eqref{ham}. 

\subsection{} To demonstrate the integrability of the Hamiltonian \eqref{hamex}, we first construct a dynamical spin model, modifying the approach in \ref{spincm} to incorporate dependence on the spectral variables $\lambda$. It will be convenient to introduce an auxiliary space $\Vc\cong V$ with $\lambda\in \Vc$, and denote by $\Sc$ a copy of the symmetric group acting on $\Vc$.  
Consider the algebra $\D^\vee(V)$ of differential operators on $V$ whose coefficients also depend on $\lambda$; it is generated by (operators of multiplication by) functions $f\in \c(V\times \Vc)$ and the derivations $\partial_i=\partial/\partial x_i$. The group $S_n\times \Sc$ acts on $V\times \Vc$ and on $\D^\vee(V)$, so we can form a crossed product 
\begin{equation*}
\A^\vee:=\D^\vee(V)*(S_n\times\Sc)\,.
\end{equation*}  	
Elements of $\A^\vee$ can be uniquely written as
\begin{equation*}
a=\sum_{w_1, w_2\in S_n}a_{w_1w_2}w_1\otimes w_2^{\vee}\quad\text{with}\ a_{w_1w_2}\in\D^\vee(V)\,.
\end{equation*}
Define a linear map 
\begin{align}\nonumber
{\Res}^\vee&:\ \A^\vee\,\to\, \D^\vee(V)*\Sc\,,\\ 
\label{reshatl}
&\sum_{w_1,w_2\in S_n}a_{w_1w_2}w_1\otimes w_2^\vee\ 
\mapsto \sum_{w_1,w_2\in S_n}a_{w_1w_2}(w_2w_1^{-1})^\vee\,. 
\end{align}
Equivalently, ${\Res}^\vee(a)$ is the unique element $L_a\in\D^\vee(V)*\Sc$ such that
\begin{equation}
ae^\vee=L_ae^\vee\,,\qquad e^\vee:=\frac{1}{n!}\sum_{w\in S_n} w\otimes w^\vee\,.
\end{equation}
We have an $S_n$-action on $\A^\vee$ and $\D^\vee(V)*\Sc$ by conjugation, 
\begin{equation}\label{acc}
a\mapsto (w\otimes w^\vee)a (w\otimes w^\vee)^{-1}\qquad\forall\ w\in S_n\,.
\end{equation}
It is easy to check that the map ${\Res}^\vee$ is $S_n$-equivariant with respect to this action.   
\begin{lemma}
The restriction ${\Res}^\vee\,:\, (\A^\vee)^{S_n}\to (\D^\vee(V)*\Sc)^{S_n}$ of the map \eqref{reshat} is an algebra homomorphism.
\end{lemma}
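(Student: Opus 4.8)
The plan is to mimic, almost verbatim, the proof of Lemma \ref{alg}, since the combinatorial structure of $\Res^\vee$ is identical to that of $\widehat{\Res}$ — the only change is that the second factor is now a genuine symmetric group $\Sc$ acting geometrically on $\Vc$ (via the $\lambda$-variables) rather than an abstract copy $\c S_n$ with no differential action. First I would recast $\Res^\vee$ in the ``idempotent'' form: for $a\in\A^\vee$, the element $L_a:=\Res^\vee(a)$ is the unique element of $\D^\vee(V)*\Sc$ satisfying $ae^\vee = L_a e^\vee$, where $e^\vee=\frac1{n!}\sum_{w}w\otimes w^\vee$ is the idempotent introduced just above the statement. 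Uniqueness here needs a brief remark: $e^\vee$ is an idempotent in $\A^\vee$ (an easy check, $w\otimes w^\vee$ being group-like), and right multiplication by $e^\vee$ is injective on $(\D^\vee(V)*\Sc)\cdot e^\vee \subset \A^\vee$ because $\D^\vee(V)*\Sc \hookrightarrow \A^\vee$ via $b\mapsto (1\otimes\text{nothing})$... more precisely one identifies $L\in\D^\vee(V)*\Sc$ with its image in $\A^\vee$ under the inclusion sending the $\Sc$-part diagonally, and checks $Le^\vee=0\Rightarrow L=0$ by expanding $L=\sum_{w_2}b_{w_2}w_2^\vee$ and reading off coefficients.

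Next, the core point: for $a\in(\A^\vee)^{S_n}$, the $S_n$-invariance under the conjugation action \eqref{acc} is equivalent to $a e^\vee = e^\vee a$. Indeed $e^\vee$ is, by construction, the averaging idempotent for exactly that conjugation action — $e^\vee$ is fixed by \eqref{acc}, and $(w\otimes w^\vee)e^\vee = e^\vee = e^\vee(w\otimes w^\vee)$ for every $w$, so $a e^\vee = e^\vee a$ holds for any $a$ commuting with all $(w\otimes w^\vee)$, i.e. for $a\in(\A^\vee)^{S_n}$. Granting this, the homomorphism property is the same three-line computation as in Lemma \ref{alg}: for $a,b\in(\A^\vee)^{S_n}$ with $L_a=\Res^\vee(a)$, $L_b=\Res^\vee(b)$, one has $L_{ab}e^\vee = abe^\vee = a(be^\vee)=a(L_be^\vee)=(ae^\vee)L_b = (e^\vee a)L_b$... one must be a little careful here because $L_b\in\D^\vee(V)*\Sc$ need not commute with the $S_n$-part of $e^\vee$; the clean route is $L_{ab}e^\vee = abe^\vee = a\,L_b e^\vee$, then use that $L_b e^\vee = e^\vee L_b e^\vee$ (a consequence of $L_b$ being, after the inclusion, $S_n$-conjugation-invariant since $\Res^\vee$ is $S_n$-equivariant and lands in the invariants) to write $a L_b e^\vee = a e^\vee L_b e^\vee = L_a e^\vee L_b e^\vee = L_a L_b e^\vee$, and conclude $L_{ab}=L_aL_b$ by uniqueness. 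I would also note $\Res^\vee(1)=1$, so it is a unital homomorphism onto its image.

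The main obstacle — and it is really the only place where the $\lambda$-dynamics intervenes — is bookkeeping the fact that $\Sc$ now acts on $\c(V\times\Vc)$ through the $\lambda$-variables, so inside $\D^\vee(V)*\Sc$ one has genuine commutation relations $w_2^\vee f = (w_2.f)\,w_2^\vee$ for $f\in\c(V\times\Vc)$. This means the reordering $a_{w_1w_2}w_1\otimes w_2^\vee \mapsto a_{w_1w_2}(w_2w_1^{-1})^\vee$ defining \eqref{reshatl} is not just a relabelling of basis elements: when one multiplies out $abe^\vee$ and collects terms, the $\lambda$-dependent coefficients get moved past $\Sc$-elements, and one must check this is consistent with the same moves done on the $L_aL_be^\vee$ side. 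However, this consistency is automatic from the characterisation $ae^\vee=L_ae^\vee$, which is manifestly ring-theoretic and makes no reference to a choice of normal form — which is exactly why I would run the whole argument through the idempotent $e^\vee$ rather than through the explicit formula \eqref{reshatl}. So the proof is: (i) $e^\vee$ idempotent and $S_n$-conjugation-invariant; (ii) uniqueness of $L_a$ with $ae^\vee=L_ae^\vee$; (iii) $a\in(\A^\vee)^{S_n}\Leftrightarrow ae^\vee=e^\vee a$, and $\Res^\vee$ is $S_n$-equivariant hence restricts to the invariants; (iv) the three-line multiplicativity computation, with the one careful insertion $L_be^\vee=e^\vee L_be^\vee$ noted above; (v) unitality. \qed
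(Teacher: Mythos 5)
Your proof is correct and is essentially the paper's own argument: the paper disposes of this lemma by saying the proof is identical to that of Lemma \ref{alg}, namely the idempotent computation you reproduce via $e^\vee$, using $a e^\vee = L_a e^\vee$, $a e^\vee = e^\vee a$ and $L_a e^\vee = e^\vee L_a$ for $S_n$-invariant elements, plus uniqueness. One small wording slip worth noting: the inclusion $\D^\vee(V)*\Sc\hookrightarrow\A^\vee$ implicit in the characterisation $ae^\vee=L_ae^\vee$ sends $w^\vee\mapsto 1\otimes w^\vee$ rather than ``diagonally'' (the diagonal embedding would neither reproduce \eqref{reshatl} nor give injectivity of right multiplication by $e^\vee$), but your uniqueness check and the rest of the argument are consistent with the standard embedding, so this does not affect the proof.
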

Proof is identical to that of Lemma \ref{alg}. 
\qed

\subsection{} 
Restricting \eqref{reshatl} further gives an algebra map 
\begin{align*}
\Res^\vee\,:&\, (\D^\vee(V)*S_n)^{S_n}\to  (\D^\vee(V)*\Sc)^{S_n}\,,\\ 
&\sum_{w\in S_n}a_ww \mapsto  \sum_{w\in S_n}a_w(w^{-1})^\vee\,.
\end{align*}
Comparing with a similar map in \ref{spincm}, it takes elements invariant under ``diagonal''  $S_n$-action (on both $x, \lambda$). Another difference is the crossed-product structure in the resulting operator so it acts on the spectral variables.

As for examples of elements in $(\D^\vee(V)*S_n)^{S_n}$, we can substitute elliptic Dunkl operators into any $S_n$-invariant function on $\Vc\otimes V^*$. However, we would like the result to be regular near $\lambda=0$ so we can expand the coefficients into a formal series on $\lambda$ as we did for \eqref{hamex}. This motivates the following  definitions:
\begin{align}\label{defh}
\L^\vee_f:=\Res^\vee f(\lambda, y)\,,\quad
\Hsf^\vee_q:=\Res^\vee L_{q,c}(\lambda, y)\,,\qquad \text{with}\  f\in \BB_{0,\kappa}\,, q\in\c[V^*]^{S_n}\,.
\end{align}   
These are similar to the definitions in \ref{spincm}, but one important difference is that we do not take limit $\lambda\to 0$.

\begin{theorem}\label{sphhamc}
The elements $\L^\vee_f$ and $\Hsf^\vee_q$ are pairwise commuting, $S_n$-equivariant  elements of $\D^\vee(V)*\Sc$, regular near $\lambda=0$.
\end{theorem}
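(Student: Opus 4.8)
The plan is to realise both families as the image under $\Res^\vee$ of the commutative algebra $A_0^{S_n}$, embedded into $\D^\vee(V)*S_n$ by substituting elliptic Dunkl operators for the momenta, and then to read off regularity near $\lambda=0$ from results already quoted. The key structural observation is that this substitution -- relabelling the coefficient variables $x_i$ as $\lambda_i$ and sending $p_i\mapsto y_i$, so that $g\mapsto g(\lambda,y)$ -- is an \emph{algebra} homomorphism $A_0=\c(V)\otimes\c[V^*]\to\D^\vee(V)*S_n$. Indeed, on the factor $\c[V^*]$ it is the algebra map \eqref{ehom} assembled from the pairwise commuting $y_\xi$, on the factor $\c(V)$ it is multiplication by functions of $\lambda$, and these two images commute inside $\D^\vee(V)*S_n$ because each of the $\partial_i$, the coefficient functions of the $y_i$, and the transpositions $s_{ij}$ commutes with any function of $\lambda$ alone -- the copy of $S_n$ inside $\D^\vee(V)*S_n$ acting on $x$ only. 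Moreover the substitution is equivariant: by the second relation in \eqref{eduprop}, conjugating $y_i$ by $w\otimes w^\vee$ merely permutes the index $i$, so it intertwines the diagonal $S_n$-action on $A_0$ permuting the pairs $(x_i,p_i)$ with the action \eqref{acc}; it is essential here that $w^\vee$ also permutes the $\lambda_i$, which is exactly why $\Sc$ was introduced (in contrast with Section~\ref{spincm}, where $S_n$-invariance was recovered only after the limit $\lambda\to0$). Consequently the substitution carries $A_0^{S_n}$ into $(\D^\vee(V)*S_n)^{S_n}$, and onto a \emph{commutative} subalgebra of it since $A_0^{S_n}$ is commutative.

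Granting this, the argument is short. For $f\in\BB_{0,\kappa}\subseteq A_0^{S_n}$ and $q\in\c[V^*]^{S_n}$ (so that $L_{q,c}\in A_0^{S_n}$ by Theorem~\ref{efmvc}), the elements $f(\lambda,y)$ and $L_{q,c}(\lambda,y)$ lie in $(\D^\vee(V)*S_n)^{S_n}$ and pairwise commute. That $f(\lambda,y)$ is regular near $\lambda=0$ is the regularity statement of Theorem~\ref{sphham} (established there via the second proof of \cite[Thm.~3.1]{EFMV}, which applies verbatim to every $f\in\BB_{0,\kappa}$); that $L_{q,c}(\lambda,y)$ is regular near $\lambda=0$ is part~$(1)$ of Proposition~\ref{prop19}. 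These are exactly the inputs needed, the only difference with Section~\ref{spincm} being that we never pass to the limit. Now apply the algebra homomorphism $\Res^\vee\colon(\D^\vee(V)*S_n)^{S_n}\to(\D^\vee(V)*\Sc)^{S_n}$ from the present subsection: one obtains $\L^\vee_f=\Res^\vee f(\lambda,y)$ and $\Hsf^\vee_q=\Res^\vee L_{q,c}(\lambda,y)$ as elements of $(\D^\vee(V)*\Sc)^{S_n}$, which is precisely what the asserted ``$S_n$-equivariance'' means. Since $f(\lambda,y)$ and $L_{q,c}(\lambda,y)$ pairwise commute, $[\Res^\vee a,\Res^\vee b]=\Res^\vee[a,b]=0$, so the two families are pairwise commuting; and since $\Res^\vee$ merely relabels the group element to which each (unchanged) coefficient is attached, it preserves regularity near $\lambda=0$, whence $\L^\vee_f$ and $\Hsf^\vee_q$ are regular near $\lambda=0$.

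There is no genuine obstacle here: everything is assembled from Theorems~\ref{efmv}, \ref{efmvc} and \ref{sphham}, Proposition~\ref{prop19}, and the homomorphism property of $\Res^\vee$ (proved exactly as Lemma~\ref{alg}). The only points demanding care are bookkeeping ones -- checking that the substitution really is an algebra map into $\D^\vee(V)*S_n$, i.e.\ the commutation of functions of $\lambda$ with the $y_i$, which hinges on the copy of $S_n$ in $\D^\vee(V)*S_n$ acting on $x$ only; checking that it is equivariant for the correct pair of actions (the role played by $\Sc$); and making sure that the ``regularity near $\lambda=0$'' statements borrowed from \cite{EFMV,C19} are invoked \emph{before}, not after, taking $\lambda\to0$.
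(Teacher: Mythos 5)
Your proposal is correct and follows essentially the same route as the paper, which simply states that the proofs are identical to those in \S\,\ref{spincm}--\S\,\ref{in}: regularity near $\lambda=0$ from Theorem \ref{sphham}/Proposition \ref{prop19}, commutativity from the commuting Dunkl operators, and the homomorphism property of $\Res^\vee$ on $S_n$-invariants (proved as in Lemma \ref{alg}). Your explicit verification that $g\mapsto g(\lambda,y)$ is an algebra map into $\D^\vee(V)*S_n$ and that diagonal conjugation by $w\otimes w^\vee$ just permutes the index of $y_i$ (so invariance holds \emph{before} the limit, which is the role of $\Sc$) is exactly the bookkeeping the paper leaves implicit.
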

Proofs are the same as in \ref{spincm}. \qed

As an example, using the calculations in \ref{ex}, we find that for $q=e_1^2+\dots+e_n^2$,
\begin{equation}\label{hq2l}
\Hsf^\vee_q=\sum_{i=1}^n\hbar^2\partial_i^2-2\kappa^2\sum_{i<j}^n\wp(x_i-x_j)
-2\hbar \kappa\sum_{i<j}^n \phi'(\lambda_i-\lambda_j, x_i-x_j)s^\vee_{ij}\,.
\end{equation}
This is a generalisation of the spin Calogero--Moser Hamiltonian \eqref{spin}. 
Choosing any $\c[[\Vc]]*\Sc$-module $\mathcal U$, and expanding $\Hsf^\vee_q$ into a formal series in $\lambda$ makes it a differential operator acting on $\mathcal U$-valued functions $\varphi: V\to \mathcal U$.
For example, we can choose $\mathcal U=U^{\otimes n}$ with $U=\c^m\otimes\c[[s]]$ and with $\lambda_i$ acting by $s^{(i)}$, in the same way as it was done for the Hamiltonian \eqref{hamex}. 

\subsection{} Now the commuting Hamiltonians for the spin chain \eqref{hamex} can be constructed in the same way as before. We omit the proofs as they are identical to those in \ref{in} and use evaluation at an equilibrium $(x^*, p^*)$. 

Let us view the Hamiltonians \eqref{defh} as elements of $\Ah * \Sc$, so we can take classical limits. We have a counterpart of \eqref{dec}:
\begin{equation}\label{decl}
\Hsf^\vee_q=\Hsf_q+\hbar\Asf^\vee_q\qquad\text{with}\ \Asf^\vee_q\in \Ah*\Sc\,.
\end{equation}
We define \emph{principal Hamiltonians} by
\begin{equation}
\mathcal H_q^\vee:=\eta_0(\Asf^\vee_q)|_{(x,p)=(x^*, p^*)}\,,\quad q\in\c[V^*]^{S_n}\,.
\end{equation}
For example, for $q=e_1^2+\dots+e_n^2$ one finds from \eqref{hq2l}:
\begin{equation}\label{hamex1}
\mathcal H_q^\vee=-2\kappa\sum_{i<j}^n\phi'\left(\lambda_i-\lambda_j, \frac{i-j}{n}\right){s}_{ij}^\vee-2\kappa\sum_{i<j}^n \wp\left(\frac{i-j}{n}\right)
\id^\vee\,.
\end{equation}
The multiple of the identity can be removed, giving 
 \begin{equation}\label{hamex2}
\H^\vee=\sum_{i<j}^n\phi'\left(\lambda_i-\lambda_j, \frac{i-j}{n}\right){s}_{ij}^\vee\,.
\end{equation}
We also define \emph{higher Hamiltonians} by
\begin{equation}
\mathcal H^\vee_f:=\eta_0(\L^\vee_f)|_{(x,p)=(x^*, p^*)}\,,\quad f\in \BB_{0,\kappa}\,.
\end{equation}
\begin{theorem} \label{intc}
The elements $\H_q^\vee$, $\H_f^\vee$ with $q\in\c[V^*]^{S_n}$, $f\in \BB_{0,\kappa}$ form a commutative family in $\c[[\Vc]]*S_n^\vee$.
\end{theorem}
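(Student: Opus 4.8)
The plan is to repeat, almost verbatim, the proof of the corresponding statements in Section~\ref{in}; the only new structural feature is that the second symmetric group $\Sc$ now acts on the spectral variables $\lambda$, so that $\Res^\vee$ produces a crossed product $\D^\vee(V)*\Sc$ rather than a tensor product, and I will need to remark why this is harmless. The ingredients are: Theorem~\ref{sphhamc} (the $\Hsf^\vee_q$ and $\L^\vee_f$ pairwise commute and are regular at $\lambda=0$); the decomposition \eqref{decl}, which I would justify exactly as Corollary~\ref{qcl} --- by Proposition~\ref{prop19}(2) the classical limit of $\Hsf^\vee_q$ coincides with that of the spin-free Hamiltonian $\Hsf_q\in\Ah$, so $\Hsf^\vee_q-\Hsf_q\in\hbar(\Ah*\Sc)$; Proposition~\ref{jeq}; and the two elementary lemmas at the end of Section~\ref{in}, namely Lemma~\ref{lemma1} and its $C_0=0$ specialisation (both of which hold unchanged in the crossed-product setting, see the last paragraph).

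First I would dispose of $[\H^\vee_f,\H^\vee_{f'}]=0$ for $f,f'\in\BB_{0,\kappa}$: this follows at once from $[\L^\vee_f,\L^\vee_{f'}]=0$ (Theorem~\ref{sphhamc}), since the classical limit $\eta_0$ is an algebra map and evaluation of coefficients at $(x,p)=(x^*,p^*)$ is an algebra homomorphism on the subalgebra of elements regular there --- the operators involved being regular at $(x^*,p^*)$ because the $x^*_i=i/n$ are pairwise distinct modulo the period lattice, and, after evaluating at $(x^*,p^*)$, only the $\lambda$-dependence survives so that regularity at $\lambda=0$ places $\H^\vee_f$ in $\c[[\Vc]]*\Sc$. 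Here the equilibrium plays no role. Next I would reduce the remaining cases to $q\in\c[V^*]_0^{S_n}$: since $\Hsf^\vee_{e_1+\dots+e_n}=\hbar(\partial_1+\dots+\partial_n)=\Hsf_{e_1+\dots+e_n}$, the decomposition \eqref{decl} gives $\Asf^\vee_{e_1+\dots+e_n}=0$, hence $\H^\vee_{e_1+\dots+e_n}=0$; then, using that $q\mapsto\Hsf^\vee_q$ is an algebra map together with $p^*=0$, a short computation shows that $\H^\vee_q$ depends only on the class of $q$ modulo $\langle e_1+\dots+e_n\rangle$. For $q\in\c[V^*]_0^{S_n}$, Proposition~\ref{jeq} gives that $(x^*,p^*)$ is a joint equilibrium of $\Hsf_{q,c}=\eta_0(\Hsf_q)$, so $\{\Hsf_{q,c},G\}=0$ at $(x^*,p^*)$ for every $G$.

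The heart of the argument is then identical to Section~\ref{in}. For $[\H^\vee_q,\H^\vee_{q'}]=0$ with $q,q'\in\c[V^*]_0^{S_n}$ I would apply Lemma~\ref{lemma1} to $B=\Hsf^\vee_q$, $C=\Hsf^\vee_{q'}$ with the decomposition \eqref{decl}: then $B_0=\Hsf_q$ and $C_0=\Hsf_{q'}$ lie in $\Ah$, $[B_0,C_0]=0$ since the higher CM Hamiltonians $\Hsf_q$ commute, and $[B,C]=0$ by Theorem~\ref{sphhamc}, so the lemma yields $\{b_0,c_1\}-\{c_0,b_1\}+[b_1,c_1]=0$ with $b_0=\Hsf_{q,c}$, $c_0=\Hsf_{q',c}$, $b_1=\eta_0(\Asf^\vee_q)$, $c_1=\eta_0(\Asf^\vee_{q'})$; evaluating at $(x^*,p^*)$ annihilates the first two terms, leaving $[b_1,c_1]\big|_{(x^*,p^*)}=[\H^\vee_q,\H^\vee_{q'}]=0$. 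For $[\H^\vee_q,\H^\vee_f]=0$ I would argue in the same way, using the $C_0=0$ specialisation of Lemma~\ref{lemma1} with $B=\Hsf^\vee_q$ and $C=\L^\vee_f$, obtaining $\{b_0,c\}+[b_1,c]=0$ with $c=\eta_0(\L^\vee_f)$, and then evaluating at $(x^*,p^*)$.

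The point requiring attention --- the ``hard part'', though it is really only bookkeeping --- is that $\Res^\vee$, $\eta_0$ and evaluation at $(x^*,p^*)$ may be applied in any order on the regular elements involved, and that the two lemmas of Section~\ref{in}, stated there for a tensor product, remain valid for the crossed product $\D^\vee(V)*\Sc$. The latter is clear once one notes that every spin-free operator occurring above ($\Hsf_q$, $\Hsf_{q,c}$, $\hat p_1+\dots+\hat p_n$) is independent of $\lambda$ while $\Sc$ acts only on $\lambda$: such operators commute with $w^\vee$, the order-$\hbar$ expansion only differentiates in $x$, and consequently $\{a,bw^\vee\}=\{a,b\}w^\vee$ for such $a$, exactly as in Lemma~\ref{lemma1}. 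Thus the whole substance of the proof lies in Theorem~\ref{sphhamc}, Proposition~\ref{jeq}, and the lemmas of Section~\ref{in}, and the argument carries over unchanged.
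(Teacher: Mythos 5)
Your proposal is correct and is essentially the paper's own proof: the paper proves Theorem \ref{intc} simply by declaring that the arguments of \S\,\ref{in} (the decomposition \eqref{decl}, Proposition \ref{jeq}, Lemma \ref{lemma1} and its $C_0=0$ specialisation, together with Theorem \ref{sphhamc}) apply verbatim after evaluation at $(x^*,p^*)$, which is exactly what you have written out. Your additional remarks --- justifying \eqref{decl} via Proposition \ref{prop19}(2), the reduction modulo $\langle e_1+\dots+e_n\rangle$ using $p^*=0$, and the observation that the spin-free parts are $\lambda$-independent so the lemmas survive the passage to the crossed product with $S_n^\vee$ --- only make explicit details the paper leaves implicit.
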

Choosing $\mathcal U=U^{\otimes n}$ with $U=\c^m\otimes\c[[s]]$ as a representation of $\c[[\Vc]]*S_n^\vee$ identifies \eqref{hamex2} with the Hamiltonian \eqref{hamex}, hence, it is integrable.




\begin{thebibliography}{90}



\bibitem[BGHP]{BGHP} D.~Bernard, M.~Gaudin, F.~D.~M.~Haldane, V.~Pasquier: {\it Yang--Baxter equation in spin chains with long range interactions}. J. Phys. A: Math. Gen. \textbf{26}, 5219--523 (1993) 

\bibitem[BPS]{BPS} D.~Bernard, V.~Pasquier, D.~Serban: {\it Exact solution of long-range interacting spin chains with boundaries.} Europhys. Lett. \textbf{30} 301--306 (1995)

\bibitem[BFV]{BFV} V. Buchstaber, G. Felder, A. Veselov: {\it Elliptic Dunkl operators, root systems,
and functional equations.} Duke Math. J. \textbf{76} (3) (1994), 885--911.

\bibitem[Ca]{Ca} F.~Calogero: {\it  Exactly solvable one-dimensional many-body systems.} Lett. Nuovo Cimento \textbf{13}, 411--415 (1975)

\bibitem[Ch]{C19} O.~Chalykh: {\it Quantum Lax pairs via Dunkl and Cherednik operators.} Commun. Math. Phys. \textbf{369}(1), 261--316 (2019)

\bibitem[ChL]{CL} O.~Chalykh, J.~Lamers: {\it $R$-matrix Calogero--Moser systems and spin chains.} In preparation. 

\bibitem[Che]{Che} I.~Cherednik: {\it Integration of quantum many-body problems by affine Knizhnik-Zamolodchikov equations.} Adv. Math. \textbf{106}(1), 65--95 (1994)

\bibitem[DI1]{DI08} J. Dittrich, V. Inozemtsev: {\it The commutativity of integrals of motion for quantum spin chains and elliptic functions identities.} Regul. Chaotic Dyn. \textbf{13}, 19--26 (2008)

\bibitem[DI2]{DI} J. Dittrich, V.~I.~Inozemtsev: {\it Towards the proof of complete integrability of quantum elliptic many-body systems with spin degrees of freedom.} Regul. Chaotic Dyn. \textbf{14}, 218--222 (2009)

\bibitem[E]{E} P.~Etingof: {\it Calogero--Moser systems and representation theory.} Z\"urich Lectures in Advanced Mathematics, Eur. Math. Soc., Z\"urich 2007. 

\bibitem[EFMV]{EFMV} P. Etingof, G. Felder, X. Ma, A. Veselov: {\it On elliptic Calogero--Moser systems for complex crystallographic reflection groups.}  J. Algebra \textbf{329}, 107--129 (2011) 

\bibitem[FGL]{FGL} F.~Finkel, A.~Gonz\'alez-L\'opez: {\it A new perspective on the integrability of Inozemtsev’s elliptic spin chain.} Ann. Phys. \textbf{351}, 797--827 (2014)

\bibitem[FM]{FM} M. Fowler, J. A. Minahan: {\it Invariants of the Haldane--Shastry $SU(N)$ chain.} Phys. Rev. Lett. \textbf{70}(15), 2325--2328 (1993)

\bibitem[H]{H} F.~D.~M.~Haldane: {\it Exact Jastrow--Gutzwiller resonating-valence-bond ground state of the spin-$1/2$ antiferromagnetic Heisenberg chain with $1/r^2$ exchange.} Phys. Rev. Lett. \textbf{60}, 635--638 (1988)

\bibitem[I1]{I1} V.~I.~Inozemtsev: {\it On the connection between the one-dimensional $S=1/2$ Heisenberg chain and Haldane--Shastry model.} J. Stat. Phys. \textbf{59},  1143--1155 (1990)

\bibitem[I2]{I2} V.~Inozemtsev: {\it Integrable Heisenberg--van Vleck chains with variable range exchange.} Phys. Part. Nucl. \textbf{34}, 166--193 (2003)

\bibitem[IS]{IS} V.~I.~Inozemtsev, R.~Sasaki: {\it Universal Lax pairs for spin Calogero--Moser models and spin exchange models.}  J. Phys. A: Math. Gen. \textbf{34}(37), 7621--7632 (2001)

\bibitem[KL]{KL} R. Klabbers, J. Lamers: {\it How coordinate Bethe ansatz works for Inozemtsev model.} Comm. Math. Phys. \textbf{390}(2), 827--905 (2022)

\bibitem[LOZ]{LOZ} A.~Levin, M.~Olshanetsky, A.~Zotov, A.: {\it Planck constant as spectral parameter in integrable systems and KZB equations.} JHEP \textbf{10}, 1--29 (2014)

\bibitem[LRS]{LRS} A.~Liashyk, N.~Reshetikhin, I.~Sechin: {\it Quantum integrable systems on a classical integrable background.}  {\tt arXiv:2405.17865 [math-ph]} (2024)

\bibitem[MP]{MP} J. A. Minahan, A. P. Polychronakos: {\it Integrable systems for particles with internal degrees of freedom.} Phys. Lett. B \textbf{302}(2-3), 265--270 (1993)

\bibitem[OS]{OS} T.~Oshima, H.~Sekiguchi; {\it Commuting families of differential operators invariant under the action of a Weyl group.} J. Math. Sci. Univ. Tokyo \textbf{2}(1), 1--75 (1995)

\bibitem[OP1]{OP1} M.~A.~Olshanetsky, A.~M.~Perelomov: {\it Classical integrable systems related to Lie algebras.} Phys. Rep. \textbf{71} (5), 313--400 (1981)

\bibitem[OP2]{OP2} M.~A.~Olshanetsky, A.~M.~Perelomov: {\it Quantum integrable systems related to Lie algebras.} Phys. Rep. \textbf{94}(6), 313--404 (1983)


\bibitem[Pe]{Pe} A.~M.~Perelomov: {\it Completely integrable classical systems connected with semisimple Lie algebras. III.} 
Lett. Math. Phys. \textbf{1}(6), 531--534 (1977)


\bibitem[P1]{P1} A. P. Polychronakos: {\it Exchange operator formalism for integrable systems of particles.} Phys. Rev. Lett. \textbf{69}(5), 703--705 (1992) 

\bibitem[P2]{P} A. P. Polychronakos: {\it Lattice integrable systems of Haldane--Shastry type.} Phys. Rev. Lett. \textbf{70}(15), 2329--2331 (1993)

\bibitem[SZ]{SZ} I.~Sechin, A.~Zotov; {\it $R$-matrix-valued Lax pairs and long-range spin chains.} Phys. Lett. B \textbf{781}, 1--7 (2018)

\bibitem[SS]{SS} D.~Serban, M.~Staudacher: {\it Planar $N=4$ gauge theory and the Inozemtsev long range spin chain.} JHEP \textbf{06} 001 (2004) 

\bibitem[S]{S} B.~S.~Shastry: {\it Exact solution of an $S=1/2$ Heisenberg antiferromagnetic chain with long-ranged interactions.} Phys. Rev. Lett. \textbf{60}, 639--642 (1988)  

\bibitem[TH]{TH} J.~C.~Talstra, F.~D.~M.~Haldane: {\it Integrals of motion of the Haldane--Shastry model.}  J. Phys. A: Math. Gen. \textbf{28}, 2369--2377 (1995)









\end{thebibliography}
\end{document}